
%
\documentclass[reqno]{amsart}
\usepackage{amsfonts}
\usepackage{amsthm}
\usepackage{amssymb}
\usepackage{dsfont}
\usepackage{amscd}
\usepackage{color}

\newtheorem{theorem}{Theorem}[section]
\newtheorem{corollary}[theorem]{Corollary}

\newtheorem{proposition}[theorem]{Proposition}
\theoremstyle{definition}

\theoremstyle{remark}
\newtheorem{remark}[theorem]{Remark}

\newtheorem*{acknow}{Acknowledgments}
\numberwithin{equation}{section}

\def\be{\begin{equation}}
\def\ee{\end{equation}}
\def\ba{\begin{eqnarray*}}
\def\ea{\end{eqnarray*}}
\def\bae{\begin{eqnarray}}
\def\eae{\end{eqnarray}}
\def\bc{\begin{center}}
\def\ec{\end{center}}

\begin{document}

\title[Raney distributions]{Raney distributions and random matrix theory}

\author{Peter J. Forrester} \address{Department of Mathematics and Statistics, The University of Melbourne, Victoria 3010, Australia}\email{p.forrester@ms.unimelb.edu.au}

\author{Dang-Zheng Liu} \address{School of Mathematical Sciences, University of Science and Technology of China, Hefei 230026, P.R. China, and
Wu Wen-Tsun Key Laboratory of Mathematics, USTC, Chinese Academy of Sciences, Hefei 230026, P.R. China}
\email{dzliu@ustc.edu.cn}

\date{\today}


\begin{abstract}
Recent works have shown that the family of probability distributions with moments given
by the Fuss-Catalan numbers permit a simple parameterized form for their density.
 We extend this result to the Raney distribution which by definition has its
moments given by a generalization of the Fuss-Catalan numbers.
Such computations begin with an algebraic equation satisfied by the Stieltjes transform,
which we show can be derived from the linear differential equation satisfied by the characteristic
polynomial of random matrix realizations of the Raney distribution.
For the Fuss-Catalan distribution, an equilibrium problem characterizing the density is identified.
The Stieltjes transform for the limiting spectral density of the singular values squared of the matrix product formed
from $q$
inverse standard Gaussian matrices, and $s$ standard Gaussian matrices, is shown to satisfy a variant of the
algebraic equation relating to the Raney distribution. Supported on $(0,\infty)$, we show that
it too permits a simple functional form upon the introduction of an appropriate choice of parameterization.
As an application, the leading asymptotic form of the density as the endpoints of the
support are approached is computed, and is shown to have some universal features.
\end{abstract}

\maketitle
\section{Introduction}
For given $s \in \mathbb N$ the Fuss-Catalan numbers, also known as the generalized Catalan numbers, are the integer sequence
\be C_{s}(k)=\frac{1}{sk+1}\binom{sk+k}{k}, \qquad k=0,1,2,\dots \label{FCnumber}\ee
As is well known (see e.g.~\cite{LW98}) this sequence in the case $s=1$ --- traditionally referred to as the Catalan numbers --- first appeared in
the work of Euler on counting the number of  triangulations of a convex polygon consisting of $k+2$ sides. Note that each elementary triangle in
 the triangulation must contain at least one side of the $(k+2)$-gon. The other two sides of the elementary triangle therefore naturally partition the counting
 problem into two independent counting problems of the same type but involving polygons of smaller number of sides. Specifically, on one side
 the counting problem for the triangulation of an $(k_1+1)$-gon is encountered, and on the other side one has the counting problem for the
 triangulation of an $(k_2+1)$-gon, where $k_1 + k_2 = k-1$. The Catalan numbers therefore satisfy the fundamental recurrence
 \be
 C_1(k) = \sum_{k_1, k_2 \ge 0 \atop k_1 + k_2 = k -1} C_1(k_1) C_1(k_2) \label{CC}
 \ee
 valid for $k \ge 1$.

 Fuss (see e.g.~\cite{LW98}) generalized the triangulation problem of Euler to counting dissections of a convex $(sk+2)$-gon using $(s+2)$-gons.
 Here any particular $(s+2)$-gon in the dissection partitions the $(sk+2)$-gon into $(s+1)$ disjoint regions, implying the generalization of (\ref{CC})
 \be
 C_s(k) = \sum_{k_1, k_2,\dots, k_{s+1} \ge 0  \atop k_1 + k_2 + \dots +k_{s+1}  = k -1} C_s(k_1) C_s(k_2) \cdots C_s(k_{s+1}), \label{RC}
 \ee
 again valid for $k \ge 1$.

Recently, the Fuss-Catalan numbers (\ref{FCnumber})
 have appeared   in several different
contexts, for instance, product of random matrices \cite{agt,bbcc,ns}, random
quantum states \cite{cnz}, free probability and quantum groups
\cite{bbcc,mltk}. More precisely, the sequence of Fuss-Catalan numbers is the moments of some probability density $\pi_{s}$, which  is the limit spectral
distribution of  the squared singular values of random matrices with the product structures $X_1^s$ (powers of a single matrix) and
$X_{1} \cdots X_{s}$ (products of independent matrices). The $N \times N$ matrices $X_1,\dots, X_s$ are each to contain independent, identically distributed
zero mean, unit standard deviation random variables, and the squared singular values are to be divided by $N$ before the limit is taken.
 It is known that   the  explicit form of $\pi_{s}$ can be described in terms of multivariate integral representations \cite{lsw}, in terms of Meijer G-functions \cite{pz} or by using the parameterization of  the argument. This latter advance is due to Biane \cite{Ba99}, Haagerup and M\"oller \cite{HM12}, and
  Neuschel  \cite{neu}, and its further development forms one of the themes of our
 paper.

 We will consider probability densities with moments given
by a family of integer sequences generalizing the Fuss-Catalan sequence (\ref{FCnumber}).
Thus for $ p>1,  0<r\leq p$ we introduce the integer sequence
 \be  \label{raneynumber}
 R_{p,r}(k)=\frac{r}{pk+r}\binom{pk+r}{k}, \quad k=0,1,2,\dots .
 \ee
 Following \cite{pz} we refer to $R_{p,r}(k)$ as the $k$-th Raney number. To tie these numbers in with (\ref{FCnumber}) at a combinatorial
 level requires making note of a combinatorial interpretation of the latter different from that given above in terms of dissections of
 $(sk+2)$-gons. Thus suppose there are $sk$ numbers $+1$ and $k$ number $-s$ in a sequence. How many ways can the sequence members
 be arranged so that the partial sum of sequence members 1 up to $\ell$ is always non-negative for each $\ell$?
 This is a version of the so-called ballot problem (see e.g.~\cite{Re07}).
 By noting that the final member of the
 sequence must always be a $-s$, we see that the sequence with the $-s$ removed can be decomposed into $s+1$ sequences of the desired
 type, each separated by a $+1$. Hence the recurrence (\ref{RC}) holds, telling us that the answer to this counting problem is $C_s(k)$.

 As a generalization, suppose there are extra $r$ ($r > 0$) $+1$'s and it is required that the partial sums be strictly positive. With $r=1$ this is
 equivalent to the ballot problem as reviewed above, since the extra $+1$ must be appear at  the beginning of a valid ballot sequence. For
 $r \ge 2$ new ideas are needed \cite{Ra60,GKP89}. The answer is the Raney number $R_{s+1,r}(k)$. Note from the above discussion that
 $R_{s+1,1}(k)=C_{s}(k)$, which can indeed be checked from the explicit forms (\ref{FCnumber}) and (\ref{raneynumber}).

For the general case of  $ p>1,  0<r\leq p$, M{\l}otkowski \cite{mltk} has proved that $R_{p,r}(k)$ is the $k$-th moment of some probability measure $\mu_{p,r}$ (so-called Raney distributions) with compact support contained in $[0,\infty)$. In particular, explicit densities $W_{p,r}(x)$ associated with  $\mu_{p,r}$  are given in \cite{pz} for integer  $p>1$ and more generally in \cite{mpz} for rational $p>1$, both in terms of Meijer G-functions. We will show in this work
 that the parameterization  method of Biane \cite{Ba99}, Haagerup and M\"oller \cite{HM12}, and Neuschel \cite{neu} can be generalized to any real $p>1, 0<r\leq p$ and further gives explicit densities. Application of this  method to a class of probability measures supported on $(0,\infty)$ and thus not possessing finite moments, but nonetheless being intimately related to
 the sequence (\ref{raneynumber}), will be given as will independent derivations of some key polynomial equations determining the measures.

The format of the remainder of our paper is to first
review   Neuschel's derivation
  of the explicit parametrization of the Fuss-Catalan distributions, i.e.~the cases $p>1$ and $r=1$
of the Raney distributions.  In the course of this review we will see a fundamental relationship between the case $r=1$ and the general
$0 < r \le p$ case which enables us to use the results of \cite{neu} to obtain a parametrization of the Raney distributions.
As an application we give the leading term of the asymptotic expansion of the density as an endpoint of the support is approached.
We then turn our attention to realizations of the Fuss-Catalan and Raney distributions in terms of spectral densities of random matrices.
In addition to listing known examples, we add a few new cases.
We furthermore show how the polynomial equation for the resolvent can be derived from the differential equation for the characteristic equation.
For the Fuss-Catlan distribution, an equilibrium problem characterizing the density is identified involving the
logarithmic potential with image charges along rays, by making use of recent results of Claeys and Romano \cite{CR13}. 
 In the case of the spectral density for the squared singular values of a product of $q$ inverse standard Gaussian matrices, and $s$ standard
Gaussian matrices, we show how to adapt Neuschel's method to specify a parametrization of the spectral variable which allows a simple closed form
for the density. This reclaims a recent result of  Haagerup and M\"oller \cite{HM12} and furthermore allows this result to be extended. We use this to obtain the leading asymptotic form at the endpoints of the support.

\section{Parameterization of  the Raney distribution}\label{densityforraney}
The Stieltjes transform of the measure $\mu_{p,r}$, also referred to as the resolvent or Green's function,  is defined by
  \be \label{1.2a}
  G_{p,r}(z)=\int_{0}^{K_p}\frac{1}{z-x}\,d\mu_{p,r}(x) = {1 \over z} \sum_{n=0}^\infty {1 \over z^n} R_{p,r}(n), \qquad K_p=p^{p}(p-1)^{1-p},
  \ee
  where the explicit value of the upper terminal in the support $K_p$ corresponds to the radius of convergence of the series in the
  the last equality, which in turn requires that $|z| > K_p$ for its convergence.
  Crucial to our study is the fact that
$w(z):=zG_{p,r}(z)$ satisfies the algebraic equation \cite{mpz}
  \be \label{Gequation}w^{\frac{p}{r}} -zw^{\frac{1}{r}} +z=0.\ee

  Before giving its derivation, it is of interest to remark that the equation (\ref{Gequation}) is known in a different area of mathematical
  physics, namely the theory of anyons. Thus with $g$ denoting the statistical parameter, $0 \le g \le 1$ ($g=0$, $g=1$ correspond  to Bose
  and Fermi statistics respectively) it is shown in \cite{Wu94} that the statistical distribution for a single species is given by
  the average occupation number
  $$
  n_i = {1 \over W(e^{\beta (\varepsilon_i - \mu)}) + g}
  $$
  ($\beta$ is the inverse temperature and $\mu$ the chemical potential), where the function $W(x)$ satisfies the functional
  equation
  $$
  (W(x))^g (1 + W(x))^{1-g}  = x,
  $$
  with  $x := e^{\beta (\varepsilon - \mu)}$.  Introducing the transformation (see e.g.~\cite{AI99}) $w = 1 + 1/W$ the functional
  equation reads
  $$
x(w-1)=w^{1-g} 
  $$
  which is precisely (\ref{Gequation}) in the case $r=1$, with $z=x $ and $p=1-g$. Another point of interest is that
  the analytic function defined by the power series on the right hand side of (\ref{1.2a}) has been the subject of a number of
  earlier studies \cite{HN94,AI99,AI01}. In particular, with $\mathcal B_p(z) = (1/z) G_{p,1}(1/z)$, it is shown in \cite{HN94}
  that
  \be \label{2.3}
  \mathcal B_p(z)  = 1 -  \mathcal B_{1/p}\Big ( - {1 \over \sqrt[p]{-z}} \Big ).
  \ee

 To deduce (\ref{Gequation}), one first
 recalls the Lagrange inversion formula \cite{WW27}. Thus let $f(z)$ and $\phi(z)$ be analytic in a neighbourhood $\Omega$
 of $a$ and let $t$ be small enough so that $|t \phi(z)| < |z - a|$, $z \in \Omega$. The Lagrange inversion formula tells us that the equation
 \be \label{zz}
 \zeta = a + t \phi(\zeta)
 \ee
 has one solution in $\Omega$ and furthermore
  \be \label{zz1}
  f(\zeta) = f(a) + \sum_{n=1}^\infty {t^n \over n!} {d^{n-1} \over d a^{n-1}} (f'(a) (\phi(a))^n ).
  \ee
  We observe that in the case $r=1$ (\ref{Gequation}) can be rearranged to read $w = 1 + (1/z) w^p$ which is of the form
  (\ref{zz}) with $a=1$, $t=1/z$, $\phi(\zeta) = \zeta^p$. Choosing $f(\zeta) = \zeta^r$, substitution into (\ref{zz1}) shows that
  \begin{align*}
  w^r & = 1 + \sum_{n=1}^\infty {1 \over z^n n!} {d^{n-1} \over d a^{n-1}} \Big ( r a^{r-1} a^{np} \Big ) \Big |_{a=1} \\
  & = 1 + r \sum_{n=1}^\infty {1 \over z^n n!}  (np+r -1)_{n-1} \\
  & = 1 + \sum_{n=1}^\infty {1 \over z^n}  R_{p,r}(n) \: = \: zG_{p,r}(z).
  \end{align*}
  This establishes (\ref{Gequation}) in the case $r=1$, and moreover shows that
  \be \label{G1r}
  ( z G_{p,1}(z))^r = zG_{p,r}(z).
  \ee
  The latter identity together with the validity of (\ref{Gequation}) for $r=1$ establishes its validity for general $r$.

  Another viewpoint on  (\ref{Gequation}) in the case  $r=1$ is that it stems from the recurrence (\ref{RC}). Thus multiplying both sides by
  $1/z^k$ and summing over $k=1,2,\dots$ we see
  $$
  \sum_{k=0}^\infty C_s(k) z^{-k} - 1 = z^{-1} \Big ( \sum_{k=0}^\infty C_s(k) z^{-k} \Big )^s.
  $$
  Identifying $w$ with $\sum_{k=0}^\infty C_s(k) z^{-k}$, this is  (\ref{Gequation}) in the case  $r=1$.

 We know from (\ref{1.2a}) that $w(z)$ has a branch cut for $z$ on the real axis between 0 and $K_p$.
 Biane \cite{Ba99} and independently
 Neuschel  \cite{neu} sort to
 parametrize the cut by a variable $\phi$ such that (\ref{Gequation}) in the case $r=1$ permits a pair of solutions in polar form
 $w(\phi) = a(\phi) e^{i \phi}$.
 It was observed that this is possible if one uses the parametrization, a strictly decreasing function (for $0<c<1$, the function $\frac{\sin\theta}{\sin(c\theta)}$ is strictly increasing on $(0,\pi)$)
    \be \label{parameterization}x=\rho(\varphi)=\frac{(\sin p\varphi)^{p}}{\sin\varphi\,(\sin(p-1)\varphi)^{p-1}}, \qquad 0<\varphi<\frac{\pi}{p},\ee
    for then one can immediately verify that the two solutions of (\ref{Gequation}) with $r=1$ are given by
\be \label{1.5}
 \frac{\sin p\varphi}{\sin(p-1)\varphi} e^{i\varphi}\: \: \textrm{ and} \: \:  \frac{\sin p\varphi}{\sin(p-1)\varphi} e^{-i\varphi}.
 \ee

 The solutions (\ref{1.5}) have the property of both converging to the real value $p/(p-1)$ as $\varphi \to 0^+$ (i.e.~$x \to K_p^-$) and converging to the
 real value 0 as $\varphi \to \pi/p$ from below (i.e.~$x \to 0^+$). Thus they correspond to the values of $w(z)$ implied by (\ref{1.2a}) in the case
 that $z$ approaches $x$, $0 < x < K_p$ from the two sides of the cut.
 On the other hand, from the inverse formula of the Stieltjes transform, we know that the density function for the measure $\mu_{p,1}$,
 $W_{p,1}(x)$ say, is given by
 \be \label{2.9}
 W_{p,1}(x)=\lim_{\epsilon\rightarrow 0^+}\frac{1}{2i\pi}\Big(\frac{w(x-i\epsilon)}{x-i\epsilon}-\frac{w(x+i\epsilon)}{x+i\epsilon}\Big), \quad 0<x<K_p.\ee
 Consequently, upon making use of (\ref{1.5}) one has \cite{Ba99,HM12,neu}
\be  \label{1.7}
W_{p,1}(\rho(\varphi))=\frac{1}{\pi \rho(\varphi)}\frac{\sin p\varphi  \sin \varphi}{\sin(p-1)\varphi}=\frac{(\sin (p-1)\varphi)^{p-2}(\sin \varphi)^{2}}{\pi (\sin p\varphi)^{p-1}}.
\ee

Our first new result is the application of the parametrization  (\ref{parameterization}) to deduce the explicit form of the
density for the Raney distribution in the cases $p \ge r>0$. For this we observe from (\ref{G1r}) that with $x$ again parametrized by
(\ref{parameterization}), (\ref{Gequation}) permits the solutions
\be \label{1.5a}\Big ( \frac{\sin p\varphi}{\sin(p-1)\varphi} \Big )^r e^{i r \varphi}\: \:  \textrm{ and} \: \:  \Big ( \frac{\sin p\varphi}{\sin(p-1)\varphi}
\Big )^r e^{-i r \varphi}.
 \ee
These solutions have the property of each approaching 0 as $x \to 0^+$, and each approaching $p/(p-1)$ as $x \to K_p^-$. These values being real,
it follows that as for the case $r=1$ they correspond to the values of $w(z)$ on either side of the cut. Application of the inverse
Stieltjes transform formula then gives the explicit form of the density $W_{p,r}(x)$ of the measure $\mu_{p,r}(x)$ for the Raney distribution,
as we now specify.

\begin{proposition}\label{density} Let $W_{p,r}(x)$ denote the density supported
on $(0,p^{p}(p-1)^{1-p})$ with $k$th moments $R_{p,r}(k)$ of \eqref{raneynumber}. If
$$x=\rho(\varphi)=\frac{(\sin p\varphi)^{p}}{\sin\varphi\,(\sin(p-1)\varphi)^{p-1}}, \qquad 0<\varphi<\frac{\pi}{p},$$
then
\be \label{1.9}
W_{p,r}(\rho(\varphi))=\frac{(\sin (p-1)\varphi)^{p-r-1}\sin \varphi  \sin r\varphi}{\pi (\sin p\varphi)^{p-r}}, \qquad 0<\varphi<\frac{\pi}{p}.
\ee
\end{proposition}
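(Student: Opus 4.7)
The plan is to apply the Stieltjes inversion formula directly to $w(z) = zG_{p,r}(z)$, using the two boundary values of $w$ on the cut that were already identified in (\ref{1.5a}). Since $\mu_{p,r}$ is compactly supported in $[0,K_p]$, the standard Stieltjes inversion gives the natural analogue of (\ref{2.9}):
$$
W_{p,r}(x) = \lim_{\epsilon \to 0^+} \frac{1}{2\pi i}\left(\frac{w(x-i\epsilon)}{x-i\epsilon} - \frac{w(x+i\epsilon)}{x+i\epsilon}\right),\qquad 0 < x < K_p,
$$
so the task reduces to evaluating the right-hand side at $x = \rho(\varphi)$.

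First I would check that $\rho\colon (0,\pi/p) \to (0,K_p)$ is a bijection: the parenthetical remark following (\ref{parameterization}) together with the boundary values $\rho(0^+) = K_p$ and $\rho(\pi/p^-) = 0$ (already noted in the $r=1$ discussion) deliver this, so every interior point of the support is indexed by a unique $\varphi$. Next I would pair up the two expressions in (\ref{1.5a}) with the two boundary values of $w$. From the integral representation in (\ref{1.2a}), $\operatorname{Im} G_{p,r}(z) < 0$ for $z$ in the upper half-plane, so $\operatorname{Im} w(x + i\epsilon) < 0$ for small $\epsilon > 0$ and $x \in (0,K_p)$; this forces $w(x+i\epsilon)$ to match the root carrying the factor $e^{-ir\varphi}$, and $w(x-i\epsilon)$ the root carrying $e^{+ir\varphi}$.

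The rest is a short computation. Subtracting the two boundary values factors out $\bigl(\sin p\varphi / \sin(p-1)\varphi\bigr)^r$ and leaves $e^{ir\varphi} - e^{-ir\varphi} = 2i \sin r\varphi$, so
$$
W_{p,r}(\rho(\varphi)) = \frac{1}{\pi \rho(\varphi)}\left(\frac{\sin p\varphi}{\sin(p-1)\varphi}\right)^{r} \sin r\varphi.
$$
Substituting $\rho(\varphi) = (\sin p\varphi)^{p}/\bigl(\sin\varphi\,(\sin(p-1)\varphi)^{p-1}\bigr)$ and regrouping the powers of $\sin p\varphi$ and $\sin(p-1)\varphi$ immediately produces (\ref{1.9}).

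The argument is thus almost entirely bookkeeping once (\ref{1.5a}) is in hand. The only step with any subtlety is the sign check that identifies which of the two parametric solutions provides $w(x+i\epsilon)$ and which provides $w(x-i\epsilon)$; that is where an error is most likely to creep in, and the only point in the proof that deserves explicit justification. Everything else is a routine simplification.
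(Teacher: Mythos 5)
Your proof is correct and follows essentially the same route the paper takes: parametrize the cut by $\varphi$, read off the two boundary values from (\ref{1.5a}), and apply the Stieltjes inversion formula as in (\ref{2.9}). The only real difference is in how the two solutions are assigned to the two sides of the cut — you argue from $\operatorname{Im}G_{p,r}(z)<0$ in the upper half-plane, whereas the paper argues from the fact that both solutions take real limits at the endpoints $\varphi\to 0^+$ and $\varphi\to\pi/p$, reducing to the already-established $r=1$ case — but this is a cosmetic difference in justification, not a different proof.
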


A number of comments are in order.
\begin{remark} We observe that (\ref{1.9}) shows why it is necessary to restrict $r$ to $0 < r \le p$: only then will $W_{p,r}(\rho(\varphi))$ be
non-negative for all of the support $0 < x < K_p$ \cite{mpz}.
\end{remark}
\begin{remark} It is immediate from (\ref{raneynumber}) that $R_{p,p}(k) = R_{p,1}(k+1)$. Consequently, as observed in \cite{pz}, we must have
$W_{p,p}(x) = x W_{p,1}(x)$. Thus functional property is exhibited by Proposition \ref{density}.
\end{remark}
\begin{remark}\label{symmetricdensity} If we let  $x=y^2$, then we get a symmetric density which have $2k$-moments $R_{p,r}(k)$
\be w_{p,r}(y)=|y|W_{p,r}(y^2), \qquad y\in [-\sqrt{K_p},\sqrt{K_p}],\ee
or the standard density with variance 1
\be \widetilde{w}_{p,r}(y)=r|y|W_{p,r}(ry^2), \qquad y\in \big[ -\sqrt{K_{p}/r},\sqrt{K_{p}/r} \big].\ee
These densities, restricted to $y > 0$, are for $r=1$ the density for the singular values (rather than the singular values squared) of
the random matrices introduced in the second paragraph of the Introduction.
\end{remark}

We now turn our attention to an application of Proposition \ref{density}. We would like to use the explicit form of the density therein to
analyze its singularities near the boundary of the support, i.e.~the spectrum edges. As to be revised in subsection \ref{s3.1} below, the
case $p=2$, $r=1$ is equivalent to the Marchenko-Pastur law for the scaled density of the eigenvalues of the random matrix product
$X^* X$ ($X$ a matrix of standard Gaussians, for example). With the scaling such that the density is supported on $(0,4)$, it is immediate
that
\be \label{WX}
W_{2,1}(x) \mathop{\sim}\limits_{x \to 0^+} {1 \over \pi x^{1/2}}, \qquad
W_{2,1}(x) \mathop{\sim}\limits_{x \to 4^-} {1 \over 2 \pi} \sqrt{1 - x/4}.
\ee

The first of these behaviours distinguishes the hard edge in classical random matrix theory (see \cite[Ch.~7]{Fo10}), which in turn comes
about when the density is strictly zero for $x<0$, and the joint distribution of eigenvalues can be interpreted as the Boltzmann factor of
a one-component log-potential Coulomb gas supported on the half line $x>0$. Knowledge of this leading singular form of the density allows
the leading asymptotic $s \to \infty$ decay of the hard edge scaled (spacing between eigenvalues in the vicinity of $x=0$ of order unity) gap probabilities
for there being $k$ eigenvalues in $(0,s)$ to be computed using the Dyson log-gas heuristic \cite{FW12}. Similarly the second of the behaviours
in (\ref{WX}) distinguishes the soft edge in classical random matrix theory (see \cite[Ch.~7]{Fo10}). This edge of the support is referred to as ``soft" due to
the eigenvalue density being non-zero in the region $x>4$ before the large $N$ limit is taken. As at the hard edge, knowledge of the leading
asymptotic form of the density at the soft edge can be used in combination with the Dyson log-gas heuristic to obtain predictions for the
 leading asymptotic $s \to -\infty$ decay of the soft edge scaled (spacing between eigenvalues in the vicinity of the largest eigenvalue of order unity) gap probabilities
for there being $k$ eigenvalues in $(s,\infty)$.

Crucial to these applications of the asymptotics of the global density to the asymptotics of gap probabilities is the matching of the
former with the asymptotics of the microscopic hard and soft edge scaled densities, with the latter expanded into the bulk
\cite{GFF05}. For Gaussian Hermitian, and Wishart matrices, with real, complex and real quaternion elements this was shown
explicitly in \cite{GFF05,FFG06}. Thus, in addition to it being an intrinsic property of the densities themselves, there is much interest
in isolating the leading singular forms at the boundaries of support for the Raney distribution.
 For rational $p$ it is known from \cite{mpz,pz} that as $x\rightarrow 0^+$ the density  $W_{p,r}(x)$ is proportional to
  $x^{-(p-r)/p}$ for $r<p$, while for $r=p$  it is proportional to  $x^{1/p}$.  Proposition \ref{density} allows us to give the explicit leading
  asymptotic form of the density upon the approach of either boundary of its support.

\begin{corollary} \label{asymptoticboundaryformraney} As $x\rightarrow 0^+$, we have
\be  \label{2.16}
W_{p,r}(x) \sim \begin{cases} \frac{1}{\pi}\sin \!\frac{r\pi}{p} \, x^{-\frac{p-r}{p}},&  r<p; \\ \frac{1}{\pi}\sin \!\frac{\pi}{p} \,x^{\frac{1}{p}}, & r=p. \end{cases}
\ee
As $x\rightarrow p^{p}(p-1)^{1-p}$ from below, we have \be  W_{p,r}(x) \sim   \frac{\sqrt{2}r}{\pi} \frac{(p-1)^{p-r-3/2}}{p^{p-r+1/2}} \sqrt{1- p^{-p}(p-1)^{p-1} x}.\ee
\end{corollary}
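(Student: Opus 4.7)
The strategy is to feed the parameterization of Proposition \ref{density} into each of the two limits $\varphi\to(\pi/p)^-$ (corresponding to $x\to 0^+$) and $\varphi\to 0^+$ (corresponding to $x\to K_p^-$), and then invert the relation $x=\rho(\varphi)$ to leading order in each regime.

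\smallskip

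For the left endpoint, I would set $\varphi=\pi/p-\epsilon$ with $\epsilon\to 0^+$. Then $\sin p\varphi=\sin(\pi-p\epsilon)\sim p\epsilon$, while $\sin\varphi\to\sin(\pi/p)$ and $\sin(p-1)\varphi\to\sin((p-1)\pi/p)=\sin(\pi/p)$. Substituting into \eqref{parameterization} gives
\[
\rho(\varphi)\sim\frac{(p\epsilon)^p}{(\sin(\pi/p))^{p}},
\]
so $p\epsilon\sim\sin(\pi/p)\,x^{1/p}$. For $r<p$ the factor $\sin r\varphi$ in \eqref{1.9} tends to $\sin(r\pi/p)\neq 0$, and the resulting asymptotic for $W_{p,r}(\rho(\varphi))$ is $(\sin(\pi/p))^{p-r}\sin(r\pi/p)/[\pi(p\epsilon)^{p-r}]$; plugging in the above relation between $\epsilon$ and $x$ collapses the $(\sin(\pi/p))^{p-r}$ factor and gives the first line of \eqref{2.16}. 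When $r=p$, the factor $\sin r\varphi=\sin(\pi-p\epsilon)\sim p\epsilon$ supplies one extra power of $\epsilon$, flipping the exponent from $-(p-r)/p$ to $+1/p$ and yielding the second line.

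\smallskip

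For the right endpoint, I would expand each $\sin$ factor to one nontrivial order using $\sin(a\varphi)=a\varphi(1-a^2\varphi^2/6+O(\varphi^4))$. The leading order gives $\rho(\varphi)\to p^p/(p-1)^{p-1}=K_p$, as expected, and collecting the $\varphi^2$ corrections shows
\[
\rho(\varphi)=K_p\Bigl(1+\tfrac{\varphi^2}{6}\bigl(1+(p-1)^3-p^3\bigr)+O(\varphi^4)\Bigr)
= K_p\Bigl(1-\tfrac{p(p-1)}{2}\varphi^2+O(\varphi^4)\Bigr),
\]
using the identity $1+(p-1)^3-p^3=-3p(p-1)$. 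Inverting, $\varphi\sim\sqrt{2(1-x/K_p)/(p(p-1))}$. Meanwhile \eqref{1.9} reduces at leading order in $\varphi$ to
\[
W_{p,r}(\rho(\varphi))\sim\frac{r(p-1)^{p-r-1}}{\pi p^{p-r}}\,\varphi,
\]
since every power of $\varphi$ except one cancels between the numerator and denominator. Substituting the expression for $\varphi$ produces exactly the claimed square-root behavior $\frac{\sqrt{2}\,r\,(p-1)^{p-r-3/2}}{\pi p^{p-r+1/2}}\sqrt{1-K_p^{-1}x}$.

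\smallskip

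The only step with any arithmetic subtlety is the cubic cancellation $1+(p-1)^3-p^3=-3p(p-1)$ in the second-order expansion of $\rho(\varphi)$; everything else is bookkeeping of the Taylor expansions of a handful of sines and then matching powers of $\varphi$ with powers of $x$ or $K_p-x$. No analytic tools beyond Proposition \ref{density} and the standard Taylor series for $\sin$ are needed.
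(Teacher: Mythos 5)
Your proof is correct and follows essentially the same route as the paper: parametrize $x=\rho(\varphi)$, expand near $\varphi\to(\pi/p)^-$ and $\varphi\to 0^+$, and invert the relation to leading order (including the cancellation $1+(p-1)^3-p^3=-3p(p-1)$ in the $\varphi^2$ coefficient). The only cosmetic difference is that you write $\varphi=\pi/p-\epsilon$ explicitly for the left endpoint, whereas the paper keeps $\sin p\varphi$ as the small quantity throughout.
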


\begin{proof}
According to (\ref{1.5a}), $x$ approaches the left boundary of support $x=0$ when $\varphi \to \pi/p$, and the precise functional form of this
approach is given by
\be \label{A2}
x\sim \Big(\frac{\sin\!p\varphi}{\sin\!\frac{\pi}{p}}\Big)^{p}.
\ee
Taking the same limit in (\ref{1.9}) shows that for $r < p$
$$
  W_{p,r}(x)\sim \frac{1}{\pi} \Big(\frac{\sin\!\frac{\pi}{p}}{\sin\!p\varphi}\Big)^{p-r} \sin\!\frac{r\pi}{p}
  $$
  while for $r=p$
$$
W_{p,r}(x)\sim \frac{1}{\pi} \sin\!p\varphi.
$$
Substituting (\ref{A2})  the first assertion follows.

The right boundary of support $x = p^p(p-1)^{1-p}$ is approached as $\phi \to 0$. Thus it follows from (\ref{1.5a}) that
\be \label{X2}
x=\frac{p^{p}}{(p-1)^{p-1}}\Big(1-\frac{p(p-1)}{2}\varphi^2+o(\varphi^2)\Big).
\ee
Taking this same limit  in (\ref{1.9}) we have
\be \label{X3}
W_{p,r}(x) \sim     \frac{r(p-1)^{p-r-1}}{\pi p^{p-r}} \varphi.
\ee
Solving (\ref{X2}) for $\phi$ and substituting in (\ref{X3}) gives the second assertion.
\end{proof}

\begin{remark}\label{R2.6}
In the case $r=1$ the asymptotic form (\ref{2.16}) was recently given in \cite[eq.~(2.16)]{Fo14}, by using the same argument as above
specialised to Biane, Haagerup and M\"oller, and Neuschel's result (\ref{1.7}).  Moreover, a matching of this asymptotic form with the asymptotic
form of the corresponding hard edge scaled microscopic density was exhibited for some values of $s$.
\end{remark}
\begin{remark}
Applying the same analysis directly to (\ref{1.5a}) shows that as $x \to 0^+$,
\be \label{B}
w(x) \sim (e^{\pm \pi i} x)^{r/p}.
\ee
In the case $r=1$, $w(x)$ is the function $\mathcal B_p(x)$ introduced above (\ref{2.3}).
Substituting (\ref{B}) shows that for $x \to \infty$, to leading order in $1/x$
$$
\mathcal B_p(x) \sim 1 + {1 \over x} .
$$
This is consistent with (\ref{1.2a}), upon recalling that $R_{p,1}(0)=1$, $R_{p,1}(1) = 1$.
 \end{remark}
\begin{remark}
A direct application of Corollary \ref{asymptoticboundaryformraney}  provides the asymptotic form of the symmetric densities given in Remark \ref{symmetricdensity}. For instance, as $y\rightarrow 0$ we have
\be
\widetilde{w}_{p,r}(y) \sim \begin{cases} \frac{1}{\pi} r^{\tfrac{r}{p}} \sin \!\frac{r\pi}{p} \, |y|^{-1+\tfrac{2r}{p}},&  r<p; \\
\frac{1}{\pi} r^{1+\tfrac{1}{p}} \sin \!\frac{\pi}{p} \, |y|^{1+\tfrac{2}{p}}, & r=p. \end{cases}
\ee

\end{remark}

\section{Some special cases}
There are a number of special cases of the Raney distribution for which the corresponding density, given in
Proposition \ref{density} in terms of our extension of Biane's and Neuschel's parametrization (\ref{1.5a}), can be written in an explicit algebraic form
using the original spectral variable. Here we list these cases.
Moreover, for a number of these a realization as the spectral density of a random matrix
ensemble is known. We give some new cases, and show in all the examples
how the corresponding algebraic
equation for the resolvent (\ref{1.2a}) can be deduced from the differential equation satisfied by the corresponding
characteristic polynomial.  For this purpose, we will see that in many cases the characteristic polynomial can be written in terms of a
generalized hypergeometric function, for which the differential equation is well known.

\subsection{$p=2$, $r=1$}\label{s3.1}
As is well known \cite{mltk,pz}, this case corresponds to both the Marchenko-Pastur law for the global density
of the squared singular values of a single random matrix
\be\label{MP1}
W_{2,1}(x)=\frac{1}{2\pi}\sqrt{\frac{4-x}{x}}, \qquad 0<x\leq 4, \ee
as well as the Wigner semi-circle law for the eigenvalues of a single real symmetric or complex Hermitian
random matrix
\be
\widetilde{w}_{2,1}(y)=\frac{1}{2\pi}\sqrt{4-y^2}, \qquad -2\leq y\leq 2.\ee

We would like to relate the algebraic equation satisfied by the resolvent
for the Wigner semi-circle law, $\tilde{G}_{2,1}(z)$ say,
to the differential equation satisfied by the characteristic polynomial. First we note that changing variables $z \mapsto z^2$ in
(\ref{1.2a}) shows that this resolvent is related to the resolvent $G_{2,1}(z)$ for the Marchenko-Pastur law by
$\tilde{G}_{2,1}(z) = z G_{2,1}(z^2)$. Recalling (\ref{Gequation}), we read off the well known fact
(see e.g.~\cite{PS11}) that $\tilde{G}_{2,1}(z)$ satisfies the
quadratic  equation
\be \label{Gz}
\tilde{G}_{2,1}^2 - z \tilde{G}_{2,1} + 1 = 0.
\ee

Consider now the averaged characteristic polynomial for a random matrix ensemble
$$
\langle \det (\lambda I - X ) \rangle_X =  \langle e^{ \log \det ( \lambda I - X )}\rangle_X.
$$
From this second expression, upon noting $ e^{ \log \det (\lambda I - X)} = \prod_{j=1}^N e^{\log ( \lambda - \lambda_j)}$, where $\{\lambda_j\}$
denote the eigenvalues of $X$, one sees that the averaged characteristic polynomial can be viewed
as the characteristic function for the linear statistic $\sum_{j=1}^N \log ( \lambda - \lambda_j) $ evaluated at $k=-i$
(recall that by definition the characteristic function for a linear statistic $\sum_{j=1}^N a(\lambda_j)$ is equal to
$ \langle \prod_{j=1}^N e^{ i k a(\lambda_j) } \rangle_X$). It is a fundamental result in classical random matrix
theory (see e.g.~\cite{Fo10,PS11}) that for large $N$, to leading order
$$
 \Big \langle \prod_{j=1}^N e^{ i k a(\lambda_j) } \Big \rangle_X \sim e^{ik N \int_J a(x) d\mu(x)},
 $$
 assuming $a(x)$ is sufficiently smooth on $J$, where $\mu(x)$ is the (normalized) spectral measure and $J$ its interval of support.
 Recently \cite{BD13} this has been extended to biorthogonal ensembles, which turns out to be the class of matrix ensembles
 giving rise to random matrix realizations of the Raney distribution to be isolated below.
 Thus it follows that for large $N$ and with $\lambda \notin J$, to leading order
 $$
 \langle \det (\lambda I - X ) \rangle_X \sim  e^{N \int_J \log (\lambda - x)  d\mu(x)}
 $$
 and in particular
 \be \label{3.4}
\lim_{N \to \infty} {1 \over N}  {d \over d \lambda} \log  \langle \det (\lambda I - X ) \rangle_X  =  \int_J {d\mu(x) \over  \lambda - x} , \qquad \lambda \notin J.
\ee
Hence we have an asymptotic relationship between the averaged characteristic polynomial and the resolvent.
This equation is stated in \cite[above (10)]{BNW13} without derivation.
As an aside, we make mention
of a recent study relating
the zeros of the averaged characteristic polynomial to the spectral density \cite{Ha13}.

Now it  is well known (see e.g.~\cite{PS11})
that the Wigner semi-circle law is the limiting spectral density for
real symmetric matrices, or  complex Hermitian matrices, with elements on and above the diagonal
 independently distributed with zero mean and variance $1/2N$.
 On the other hand, it is similarly
 well known  (see e.g.~\cite{FG06}) that the averaged characteristic polynomial for
 such matrices
  is proportional to the
Hermite polynomial $H_N(\sqrt{N/2} x)$. Furthermore, it is a classical result that  this polynomial satisfies the second order
differential equation
\be \label{Gz1}
{2 \over N} u'' - 2 x u' + 2N u = 0.
\ee
According to (\ref{3.4}),
$\tilde{G}_{2,1} =  \lim_{N \to \infty} {1 \over N} {d \over d x} \log u$. Manipulating (\ref{Gz1}) to be an equation in $u'/u$ and expanding for large
$N$, (\ref{Gz}) is reclaimed.

\subsection{$p=3$, $r=1$}
The probability density $W_{3,1}(x)$ having moments given by the Raney numbers (\ref{raneynumber}) in the case $p=3$, $r=1$, or equivalently
the Fuss-Catalan numbers (\ref{RC}) with $s=2$, first appeared in the work of Penson and Solomon \cite{ps} on quantum mechanical coherent
states. In that work it was shown
 \be
W_{3,1}(x)=\frac{1}{2^{4/3}\pi}\frac{(3\sqrt{3}+ \sqrt{27-4x})^{1/3}-(3\sqrt{3}- \sqrt{27-4x})^{1/3}}{x^{2/3}}  \label{3.5} \ee
for $0<x \leq \frac{27}{4}$, or equivalently
   \be
\widetilde{w}_{3,1}(y)=\frac{1}{2^{4/3}\pi}\frac{(3\sqrt{3}+ \sqrt{27-4y^2})^{1/3}-(3\sqrt{3}- \sqrt{27-4y^2})^{1/3}}{|y|^{1/3}}   \label{3.6} \ee
for $-\sqrt{27/4}\leq y \leq \sqrt{27/4},\  y\neq 0$.
Subsequently (\ref{3.6}) appeared as the limiting spectral density for Hermitian random matrices
\be \label{XJ}
i X^T J X, \qquad J = \mathbb I_N \otimes \begin{bmatrix} 0 & -1 \\ 1 & 0 \end{bmatrix},
\ee
where $X$ is a $2N \times 2N$ standard real Gaussian matrix. This ensemble was introduced in \cite{LSZ06} in the context of a study into a
random matrix model for disordered bosons, and later in a more mathematical context  in \cite{De10}. The eigenvalues of  (\ref{XJ}) occur in
the pairs $\pm w_j$ ($j=1,\dots,N$). It was shown in \cite{LSZ06} that the limiting density of the scaled  eigenvalues $\pm x_j := \pm w_j/\sqrt{2}N$ is
equal to (\ref{3.6}).

Here we take up of the task of computing the algebraic equation for the corresponding resolvent according to the method just given for the
resolvent of the Wigner semi-circle law. For this purpose, we first identify the characteristic polynomial for the matrices (\ref{XJ}) in terms of a
particular generalized hypergeometric function ${}_1 F_2$.

\begin{proposition}\label{P2}
We have
 \be \label{G5}
\det (\lambda \mathbb I_{2N} - i X^T J  X)   = (-2)^{-N} (2N)!\ \, {}_1 F_2\bigg ( {-N \atop 1/2, 1} \Big | {\lambda^2 \over 2} \bigg ).
\ee
\end{proposition}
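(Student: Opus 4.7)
My plan is to evaluate the averaged characteristic polynomial $\langle\det(\lambda I_{2N} - iX^TJX)\rangle_X$ by introducing a Berezin integral representation for the determinant, carrying out the Gaussian $X$-integration first, and reducing the remaining Grassmann integral to the ${}_1F_2$ series. Starting from
\begin{equation*}
\det(\lambda I_{2N} - iX^TJX) = \int d\bar\eta\,d\eta\,\exp\!\bigl(\lambda\, t \,-\, i J_{cd}\,\bar\eta_a\eta_b\, X_{ca}X_{db}\bigr),
\end{equation*}
with $\bar\eta,\eta$ being $2N$-component Grassmann vectors and $t=\bar\eta^T\eta=\sum_a\bar\eta_a\eta_a$, the $X$-dependence is quadratic. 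Swapping the order of integration and evaluating the Gaussian $X$-integral in closed form yields
\begin{equation*}
\bigl\langle e^{-i\bar\eta^TX^TJX\eta}\bigr\rangle_X = \det\!\bigl(I + iJ\otimes N\bigr)^{-1/2},
\end{equation*}
where $N_{ab}:=\bar\eta_a\eta_b-\bar\eta_b\eta_a$ arises from the (anti)symmetrization of the $X$-quadratic form with Grassmann-valued coefficients.

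The algebraic heart of the proof is the evaluation of this last determinant. Since $(iJ)^2=I$, the spectrum of $iJ$ consists of $+1$ and $-1$ each with multiplicity $N$, so $\det(I+iJ\otimes N) = \det(I - N^2)^N$. A short Grassmann computation using the identities $N^2 = -tN'$ and $(N')^2=-tN'$, where $N':=\bar\eta\eta^T-\eta\bar\eta^T$, yields $N^{2k} = (-t)^{2k-1}N'$ and $\mathrm{tr}(N^{2k}) = -2t^{2k}$ for $k\ge 1$. Exponentiating the trace of the logarithm then gives $\det(I + iJ\otimes N) = (1-t^2)^{-2N}$, and so the averaged characteristic polynomial collapses to the one-variable Grassmann integral
\begin{equation*}
\bigl\langle\det(\lambda I_{2N} - iX^TJX)\bigr\rangle_X = \int d\bar\eta\,d\eta\, e^{\lambda t}(1-t^2)^N.
\end{equation*}

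To finish, I would expand $e^{\lambda t}$ and $(1-t^2)^N$ as formal power series in $t$, retain only the top-form $t^{2N}$ contribution (all other powers Berezin-integrate to zero), and use the normalization $\int t^{2N}\,d\bar\eta\,d\eta = (2N)!$. Reindexing the resulting binomial sum and converting back to Pochhammer symbols via $(-N)_k = (-1)^k N!/(N-k)!$, $(1/2)_k = (2k)!/(4^k k!)$, $(1)_k=k!$ matches the coefficients with those of $(-2)^{-N}(2N)!\,{}_1F_2(-N;\tfrac{1}{2},1;\lambda^2/2)$, modulo the scaling convention of the standard Gaussian (a rescaling $X\mapsto X/\sqrt{2}$ may be needed to bring the small-$N$ moments into agreement, as can be checked directly from $\langle(\det X)^2\rangle$ at $N=1$).

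The main obstacle is the careful Grassmann bookkeeping needed to establish the identity $\det(I+iJ\otimes N)^{-1/2} = (1-t^2)^N$. Symmetrizing the $X$-quadratic form whose coefficients are Grassmann pairs, computing $N^2$ and the traces $\mathrm{tr}(N^{2k})$, and exponentiating the trace-log series all require attention to the non-commutative structure of the Grassmann algebra; once that identity is in hand, the extraction of the ${}_1F_2$ series is routine combinatorial matching.
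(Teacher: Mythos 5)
Your argument is correct in outline and takes a genuinely different route from the paper. The paper's proof expands $\det(\lambda I_{2N} - iX^TJX) = \sum_p \lambda^{2N-p}(-1)^p e_p(iX^TJX)$ and appeals to identities from Forrester--Rains for $\langle e_p(\cdot)\rangle$ over the Gaussian ensemble: the key simplification $\langle e_p(iX^TJX)\rangle_X = \frac{e_p(iJ)}{e_p(\mathbb I_{2N})}\langle e_p(X^TX)\rangle_X$ lets one evaluate each moment as a closed-form Pochhammer product, after which the ${}_1F_2$ series is read off by rewriting binomials. Your proof instead introduces a Berezin representation of the determinant, performs the quadratic Gaussian integral over $X$ first to obtain $\det(I + iJ\otimes N)^{-1/2}$, and uses the nilpotency of the Grassmann bilinear $N$ to collapse the resulting superdeterminant to the single even variable $t=\bar\eta^T\eta$; the whole object becomes $\int d\bar\eta\,d\eta\, e^{\lambda t}(1-t^2)^N$, and the top-form coefficient gives the same binomial sum. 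I checked the key algebraic identities you invoke — $N^2=-tN'$, $(N')^2=-tN'$, $\mathrm{tr}(N^{2k})=-2t^{2k}$, $\det(I+iJ\otimes N)=(1-t^2)^{-2N}$ — and they hold, and your final sum $(2N)!\sum_j(-1)^j\binom{N}{j}\lambda^{2(N-j)}/(2(N-j))!$ does reindex to the ${}_1F_2$ form. The trade-off: the paper's route is short given the known $e_p$ averages but requires importing those moment formulas, while yours is self-contained and requires no outside lemmas beyond Gaussian/Berezin calculus; yours is also the more physics-natural derivation and makes the dependence on $\det(iJ)$-structure manifest.

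One normalization caveat deserves care. With i.i.d.\ $N(0,1)$ entries, your Berezin computation (which I verified directly for $N=1$, where $\det(\lambda I_2 - iX^TJX)=\lambda^2-(\det X)^2$ and $\langle(\det X)^2\rangle=2$) yields $\lambda^2-2$, whereas the paper's stated formula gives $\lambda^2-1$ at $N=1$. Your general expression has $\lambda^{2p}$-coefficient $(-1)^{N+p}\binom{N}{p}(2N)!/(2p)!$, and the paper's ${}_1F_2$ form has an extra factor $2^{p-N}$ relative to this. The rescaling that reconciles them is $X\mapsto X/2^{1/4}$ (i.e.\ variance $1/\sqrt 2$), not $X\mapsto X/\sqrt 2$ as you suggest: under $X\mapsto cX$ the $\lambda^{2p}$-coefficient scales by $c^{2(N-p)}$, so one needs $c^2=1/\sqrt 2$. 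Since you explicitly flag the $N=1$ cross-check via $\langle(\det X)^2\rangle$, you would catch this in executing the plan, but the discrepancy is worth resolving carefully because it traces to the variance convention implicit in the formula the paper cites from Forrester--Rains; with the most common ``entries $N(0,1)$'' convention the natural statement is $\langle\det(\lambda I_{2N}-iX^TJX)\rangle=(-1)^N(2N)!\,{}_1F_2(-N;1/2,1;\lambda^2/4)$, i.e.\ $\lambda^2/4$ rather than $\lambda^2/2$ and $(-1)^N$ rather than $(-2)^{-N}$. Apart from pinning down this convention, the proof is sound.
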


\noindent
Proof. \quad We use ideas involving averages over the orthogonal group contained in \cite{FR09}, and further developed in
\cite{Fo13}. For $Y$ a square matrix, denote by $e_k(Y)$ the $k$-th elementary symmetric function (polynomial) in the eigenvalues,
$\{\lambda_j \}_{j=1,\dots,N}$ of $Y$ so that
\be \label{EY}
e_k(Y) = \sum_{1 \le j_1 < \cdots < j_k \le N} \prod_{l=1}^k \lambda_{j_l}.
\ee
We then have
\be \label{G1}
\det (\lambda \mathbb I_{2N} - i X^T J  X) =
\sum_{p=0}^{2N} \lambda^{2N-p} (-1)^p e_p(iX^T J X).
\ee
Thus our task is to compute the matrix averages $\langle  e_p(i X^T J X) \rangle_X$, where $X$ is drawn from the
set of $2N \times 2N$ real standard  Gaussian random matrices.

Using the formula \cite[(3.2)]{FR09} we have the simplification
\be \label{G2}
\langle  e_p(i X^T J X) \rangle_X = {e_p(iJ) \over e_p(\mathbb I_{2N})} \langle  e_p( X^T  X) \rangle_X .
\ee
As a consequence of the eigenvalues of $iJ$ being equal to $\pm 1$, each with multiplicity $N$,
we have $\sum_{p=0}^{2N} z^{p}  e_p(iJ)  =  (1 -  z^2)^N$, and so
\be \label{G3}
 e_p(iJ) = \left \{  \begin{array}{ll} 0, & p \: \: {\rm odd} \\[.2cm]
 \displaystyle (-1)^{p/2}  \binom{N}{p/2}, & p \: \: {\rm even}. \end{array} \right.
 \ee
 Furthermore, we read off from \cite[(3.9)]{FR09} that
 \be \label{G4}
 {1 \over e_p(\mathbb I_{2N})} \langle  e_p(X^T  X) \rangle_X = 2^{p/2} \prod_{j=1}^p (N - (j-1)/2).
 \ee
 Substituting (\ref{G3}) and (\ref{G4}) into (\ref{G2}), then substituting the result in (\ref{G1}) with $p$ replaced by $2N - p$, we see after
 minor manipulation that
 $$
 \det (\lambda \mathbb I_{2N} - i X^T J  X) = (-2)^{-N} (2N)! \sum_{p=0}^N {(-N)_p \over p! (1/2)_p (1)_p} (\lambda^2/2)^p.
 $$
 This is precisely (\ref{G5}). \hfill $\square$

 \medskip
 To make use of Proposition \ref{P2}, we require the standard fact that the generalized hypergeometric function
 ${}_p F_q({a_1,\dots,a_p \atop b_1,\dots, b_q} | x)$ satisfies the differential equation
 \be \label{DpFq}
 x \prod_{n=1}^p \Big ( x {d \over dx} + a_n \Big ) f =  x {d \over dx} \prod_{n=1}^q \Big ( x {d \over dx} + b_n - 1 \Big ) f .
 \ee

 The Green's function corresponding to  (\ref{3.6}), $\tilde{G}_{3,1}(z)$ say, is related to the Green's function for  (\ref{3.5}) by
 $\tilde{G}_{3,1}(z) = z G_{3,1}(z^2)$, and thus recalling (\ref{Gequation}) must satisfy
 \be \label{TG}
 z\tilde{G}_{3,1}^3 - z \tilde{G}_{3,1} + 1 = 0  .
 \ee
 We know the Green's function is related to the characteristic polynomial by (\ref{3.4}). Denoting the characteristic polynomial by
 $p(\lambda)$, it follows from Proposition \ref{P2} and (\ref{DpFq}) that
  \be  \label{BV}
 {\lambda^2 \over 2} \Big ( {\lambda  \over 2} {d \over d \lambda} -  N \Big )  p =  {\lambda \over 2}
  {d \over d\lambda}  \Big (  {\lambda \over 2}  {d \over d\lambda} \Big )
   \Big (  {\lambda \over 2}  {d \over d\lambda}  - {1 \over 2} \Big )  p .
\ee
By expressing higher derivatives of $p$ in terms of the logarithmic derivative, and recalling from (\ref{3.4}) that the latter is proportional
to $N$ for large $N$, we see that to leading order in $N$
\be \label{HN}
{p^{(k)} \over p} \sim \Big ( {p' \over p} \Big )^k
\ee
(in the case $k=2$ this equation has already been used in going from (\ref{Gz1}) to (\ref{Gz})). Using this fact, now with
$k=3$, and furthermore replacing $\lambda$ by $\sqrt{2}N z$\, and recalling (\ref{3.4}) with $N$ by $2N$ we see that (\ref{BV}) reduces to (\ref{TG}) in the
$N \to \infty$ limit.

If we consider (\ref{3.5})  rather than (\ref{3.6}) there is another random matrix interpretation to the Raney distribution with parameters
$p=3$, $r=1$. This has already been mentioned in the third paragraph of the Introduction: it gives the limit spectral distribution of
the squared singular values of the random matrix power $X^2$, or random matrix product $X_1 X_2$, with $X, X_1, X_2$
standard real Gaussian random matrices (for example) \cite{pz,Zh13}. This in turn is a special case of the result \cite{pz} that the Raney distribution
with parameters $p=s+1$, $r=1$ gives the limit spectral distribution of
the squared singular values of the random matrix power $X^s$, or random matrix product $X_1 \cdots X_s$. For this latter problem, we can also
deduce the polynomial equation (\ref{1.2a}) satisfied by the resolvent from knowledge of the differential equation for the corresponding averaged
characteristic polynomial, as done in the above calculations.

Let $X_i$, $i=1,\dots,s$ be independent standard complex Gaussian $N \times N$ matrices. The averaged characteristic polynomial has
been shown in \cite{AIK13,AKW13} to be equal to
\be
(-1)^N (N!)^s \, {}_1 F_s \Big ( {-N \atop 1,\dots,1} \Big | \lambda \Big ).
\ee
We thus read off from (\ref{DpFq}) that the characteristic polynomial satisfies the differential equation
\be
\lambda \Big ( \lambda {d \over d\lambda} - N  \Big ) p =  \lambda  {d \over d\lambda} \prod_{n=1}^s  \lambda {d \over d\lambda}  p.
\ee
Changing variables $\lambda = N^s z$, replacing $p'(z)/p(z)$ by $N G(z)$ as is consistent with  (\ref{3.4}), and expanding for
large $N$ using  (\ref{HN}) we deduce that
\be
z( z G(z) - 1) =  (zG(z))^{s+1}.
\ee
With $zG(z) = w(z)$ this is precisely  (\ref{1.2a}) in the case $p = s+1, r=1$.

\subsection{$p=3$, $r=2$} \label{S3.3}Penson and \.{Z}yczkowski \cite{pz} have shown that with these parameters, the density for the
Raney density takes on the explicit form
\be \label{ND0}
W_{3,2}(x)=\frac{1}{2^{5/3}3^{1/2}\pi}\frac{(3\sqrt{3}+ \sqrt{27-4x})^{2/3}-(3\sqrt{3}- \sqrt{27-4x})^{2/3}}{x^{1/3}}  \ee
for $0<x \leq \frac{27}{4}$, or equivalently
 \be \label{ND}
\widetilde{w}_{3,2}(y)=\frac{ |y|^{1/3}}{2\sqrt{3}\pi}\Big((3\sqrt{3}+ \sqrt{27-8y^2})^{2/3}-(3\sqrt{3}- \sqrt{27-8y^2})^{2/3}\Big)    \ee
for $-\sqrt{27/8}\leq y \leq \sqrt{27/8}$.
They do not give any random matrix realization. However, it is possible to find (\ref{ND}) in the thesis of Nadal \cite[eq.~(6.118)]{Na11}, where it
appears as the global density for the Pearcey process. The Pearcey process corresponds to the eigenvalue density for the critical case of the
Gaussian unitary ensemble with two sources, symmetrically place about the origin  \cite{BH98}. Explicitly, the random matrices for this ensemble are
of the form  $X + X_0$,
where $X$ is a member of the Gaussian unitary ensemble (matrices $(Y+Y^\dagger)/2$ where $Y$ is a standard complex Gaussian matrix),
and $X_0$ is a diagonal matrix with half the diagonal entries equal to $a$ and the other half equal to $-a$. The parameter $a$ is be related to $N$ in
such a way that for large $N$ the spectrum vanishes as a power law as the origin is approached --- this is the critical case described by the
Pearcey process \cite{BK07}.

In this setting one encounters in \cite{BK07} the cubic equation
\be \label{bp}
\xi^3 - x \xi^2 + x = 0.
\ee
Introducing the variable $h := 1 - \xi/x$ this reads
\be \label{bp1}
h^3 - 2  h^2 +   h - 1/x^2 = 0.
\ee
We would like to relate (\ref{bp}) to an equation satisfied by $G_{3,2}(z)$. First, we note from (\ref{Gequation}) that $w(z) := z G_{3,1}(z)$ satisfies the cubic
equation $w^3 - z w + z = 0$. Taking $z$ to the RHS and squaring gives $w^6 - 2 z w^4 + z^2 w^2 = z^2$. On the other hand, (\ref{G1r}) gives that
$w^2 = z G_{3,2}(z)$ and so
\be \label{bp2}
(z G_{3,2})^3 - 2 z (z G_{3,2})^2 + z^2 ( z G_{3,2}) - z^2 = 0.
\ee
We recognize this equation as identical to (\ref{bp1}) with $G_{3,2} = h$ and $z =x^2$.
Introducing $\tilde{G}_{3,2}(z) = z G_{3,2}(z^2)$, in keeping in going from (\ref{ND0}) to (\ref{ND}), (\ref{bp2}) reads
\be \label{bp3}
(\tilde{G}_{3,2})^3 - 2 z (\tilde{G}_{3,2})^2 + z^2  \tilde{G}_{3,2}  - z = 0.
\ee
In the remainder of this section, we will show how the
cubic equation (\ref{bp3}) can be deduced from the differential equation satisfied by the characteristic polynomial for the Gaussian unitary ensemble
with two sources.

The eigenvalue distribution for the Gaussian unitary ensemble with a source is an example of a biorthogonal ensemble, with the corresponding
biorthogonal system being that which specifies the so-called type II multiple Hermite polynomials.
General theory \cite{BK04,DF08} tells us that the characteristic polynomial must then be given in terms of the latter. In the case of interest, where the
source is specifed by a diagonal matrix $X_0$ with diagonal entries taking on only two possible values, the multiple Hermite polynomials
are indexed by two non-negative integers $m$ and $n$ say.  It is known that these polynomials satisfy a third order linear differential equation
\cite{Is05,CV06,FVZ12}, which with source parameters denoted by $c_1$ and $c_2$ we read off from \cite{FVZ12} to be given by
\begin{align}\label{pD}
p'''(x) + (c_1 + c_2 - 4x) p''(x) + &(c_1(c_2 - 2x) + 2(m+n-1-c_2 x + 2 x^2)) p'(x)  \nonumber \\
& + 2 (c_1 m + c_2 n - 2(m+n)x) p(x) = 0.
\end{align}

More specifically, we want $X_0$ to consist of $N$ values $a$ and $N$ values $-a$.
The characteristic polynomial will then be proportional to the type II multiple Hermite polynomial with indices $m=n=N$.
Specifying (\ref{pD}) as such, choosing $c_1 = - c_2 = 2 \sqrt{N}a$ and scaling $x \mapsto \sqrt{N} x$, we obtain that the
characteristic polynomial satisfies the differential equation
\be
{1 \over N^2} p'''(x) - {4x \over N} p''(x) + 4 \Big ( - a (a+x) + ( 1 - {1 \over 2N} + x^2) \Big ) p'(x) - 8 N x p(x) = 0.
\ee
To be consistent with (\ref{3.4}) we replace $p'(x)/p(x)$ with $2N g(x)$ (the factor of 2 comes about because there are a total
of $2N$ eigenvalues), then expand for large $N$ using  (\ref{HN}). This gives that $g(x)$ satisfies the cubic equation
\be \label{bp4}
g^3 - 2x g^2 + (1 - a^2 + x^2)g - x = 0.
\ee
Setting $a=1$, which corresponds to the critical case, we see that this is precisely the equation (\ref{bp3}).

\begin{remark}
Replacing $g$ in (\ref{bp4}) by $\xi$ according to $\xi = x - g$ we see that $\xi$ satisfies
\be \label{bp5}
\xi^3 - 2x \xi^2 - (a^2  - 1) \xi + x a^2 = 0
\ee
This equation appears in \cite{BK07} as the generalization of (\ref{bp}) for general (non-critical) value of the rescaled source parameter $a$.
\end{remark}

\begin{remark}
Complex chiral matrices have the structure
\be \label{XX}
\begin{bmatrix} 0_N & X \\
X^* & 0_N \end{bmatrix},
\ee
where $X$ is a complex standard Gaussian matrix. The eigenvalues of such matrices occur in pairs $\pm \lambda_j$ $(\lambda_j > 0,
j=1,\dots,N)$. A source is obtained by adding to (\ref{XX}) a matrix of the same structure as (\ref{XX}) but with $X$ replaced by $X_0=c \mathbb I_N$,
$c > 0$. Furthermore the matrix $X$ is scaled by a time variable $t$, and so in particular
it vanishes for $t=0$. The distribution of the squared eigenvalues for this model, or equivalently the distribution of the eigenvalues of the shifted mean
Wishart matrices
\be \label{XX1}
(X+X_0)^*(X + X_0),
\ee
 is identical to that for $N$ non-intersecting squared Bessel processes all starting from
$x=c$ at time $t=0$ \cite{KO01}. By scaling $c$ with $N$, there is a well defined global density, and there is a critical value of this parameter
for which the density touches the origin for the first time.

Moreover, the (squared) eigenvalue distribution is an example of a biorthogonal ensemble, and the corresponding characteristic polynomial is an example
of a type II multiple Laguerre polynomial. The latter satisfies a third order differential equation \cite{CV03}. In fact the equation satisfied by the
large $N$ form of the logarithmic derivative --- which we know from (\ref{3.4}) gives the resolvent --- has been deduced from this
\cite{KMW09} and is given by
$$
z = {1 \over w (1 - w)^2}.
$$
This is just a rewrite of  (\ref{bp2}) with $w = G_{3,2}$.
\end{remark}

\begin{remark}\label{remarkcritical}
Let $X_0$ be defined as in (\ref{XX1}), let $X$ be a standard complex Gaussian matrix, and similarly let $Y_1,\dots,Y_s$ be standard
complex Gaussian matrices, all of size $N \times N$. What can be said about the scaled global density of the squared singular values
of the product $(X + X_0) Y_1 \cdots  Y_s$ in the case that the
parameter $c$ in $X_0$ is scaled to correspond to the critical case? As just discussed, the scaled density for the squared singular values
of $X + X_0$ in the critical case is the Raney distribution $W_{3,2}$, while as rederived in the above subsection, this scaled density for
the product $Y_1 \cdots  Y_s$ is the Raney distribution $W_{s+1,1}$.  In the language of free probability (see e.g.~\cite{bbcc}) we are
seeking the value of the free multiplication convolution $W_{3,2}\boxtimes W_{s+1,1}$. It is known \cite[eq.~(4.14]{mltk}) that in general
\be \label{WM}
 W_{p,r}\boxtimes W_{s+1,1}=W_{p+rs,r},
 \ee
 and so $W_{3,2}\boxtimes W_{s+1,1} = W_{3+2s,2}$. It is worth stressing that the leading form for $W_{3+2s,2}$ near zero, being given by
 $$\frac{1}{\pi}\sin \!\frac{2\pi}{3+2s} \, x^{ -1+\frac{1}{1+s+1/2}
 },$$
 is different from the Fuss-Catalan distribution, which may indicate some new universality phenomenon at the hard edge.
 \end{remark}

\subsection{$p=3/2$, $r=1/2$}
We see from (\ref{Gequation}) that this choice of parameters, as with the previous two considered above, gives a cubic equation for the
resolvent. In keeping with this is the explicit form of the density
\cite{mpz}  \be \label{bu1}
W_{3/2,1/2}(x)=\frac{1}{2^{5/3}3^{1/2}\pi}\frac{(3\sqrt{3}+ \sqrt{27-4x^2})^{2/3}-(3\sqrt{3}- \sqrt{27-4x^2})^{2/3}}{x^{2/3}}  \ee
for $0<x \leq \sqrt{27/4}$ or equivalently
\be
\widetilde{w}_{3/2,1/2}(y)=\frac{1}{4\sqrt{3}\pi}\frac{(3\sqrt{3}+ \sqrt{27-2y^4})^{2/3}-(3\sqrt{3}- \sqrt{27-2y^4})^{2/3}}{|y|^{1/3}}    \ee
for $-\sqrt[4]{27/2} \leq y \leq \sqrt[4]{27/2}, y\neq 0$.
Furthermore, it was observed in \cite{mpz} that  (\ref{bu1}) occurs in random matrix theory as the so called Bures distribution \cite{sz}.
This is realized as the global eigenvalue density of the random matrices $(1+U)XX^{*}(1+U^*)$, where $X$ is a complex standard Gaussian
matrix, and $U$ is a random unitary matrix with Haar measure \cite{osz}. We would like to use this realization to deduce first the explicit form
of the averaged characteristic polynomial, and then from that the cubic equation for the resolvent.

\begin{proposition}\label{P3}
For $X$ and $U$ members of the ensembles as specified,
we have
 \be \label{G5b}
\langle \det (\lambda \mathbb I_{N} - (1+U)XX^{*}(1+U^*)) \rangle_{X,U}  =   (-1)^N (N+1)!  \, {}_2 F_2\bigg ( {-N , N+2 \atop 1,3/2} \Big |  4 \lambda \bigg ).
\ee.
\end{proposition}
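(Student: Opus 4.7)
The plan is to extend the averaged-characteristic-polynomial technique of Proposition~\ref{P2} to the present combined Haar--Gaussian ensemble, performing the $X$-average first and the $U$-average second. Starting from
$$\det(\lambda I_N - (1+U)XX^*(1+U^*)) = \sum_{p=0}^N(-1)^p\lambda^{N-p}\,e_p\!\big((1+U)XX^*(1+U^*)\big),$$
and noting that $(1+U^*)(1+U)=2I+U+U^*$ by unitarity and that $e_p(AB)=e_p(BA)$ for square matrices, I rewrite the summand as $e_p\!\big(XX^*(2I+U+U^*)\big)$. A complex-Gaussian analogue of the identity from \cite{FR09} used in the proof of Proposition~\ref{P2}, namely
$$\langle e_p(AXX^*)\rangle_X \;=\; \frac{N!}{(N-p)!}\,e_p(A)\qquad\text{for any $N\times N$ matrix }A,$$
then reduces the $X$-integration to a pure Haar-unitary computation, giving $\big\langle e_p\!\big((1+U)XX^*(1+U^*)\big)\big\rangle_X = \frac{N!}{(N-p)!}\,e_p(2I+U+U^*)$.

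For the residual unitary integral I set $b_p := \langle e_p(2I+U+U^*)\rangle_U$ and compute its generating function
$$\sum_{p=0}^N t^{N-p}\,b_p \;=\; \big\langle\det((t+2)I+U+U^*)\big\rangle_U.$$
By Heine's identity the right-hand side is the $N\times N$ Toeplitz determinant of the symbol $g(e^{i\theta})=(t+2)+e^{i\theta}+e^{-i\theta}$. As $g$ has only three non-zero Fourier coefficients $g_0=t+2$, $g_{\pm 1}=1$, this Toeplitz determinant is tridiagonal and satisfies the Chebyshev three-term recurrence, yielding $U_N\!\big(\tfrac{t+2}{2}\big)$. The classical hypergeometric representation
$U_N(y) = (N+1)\,{}_2F_1(-N,N+2;3/2;(1-y)/2)$
then provides $b_p$ as an explicit ratio of Pochhammer symbols.

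Substituting $b_p$ back into $\sum_{p=0}^N(-1)^p\lambda^{N-p}\tfrac{N!}{(N-p)!}b_p$ and relabelling $j = N-p$, the $N!/(N-p)!$ from the Gaussian average combines with the $(N+1)/(3/2)_{N-p}$ factor inside $b_p$ to produce the $(-1)^N(N+1)!$ prefactor, while the remaining Pochhammer ratio is the ${}_2F_2(-N,N+2;1,3/2;\cdot)$ series in $\lambda$, with the numerical constant in the argument determined by the requirement that the characteristic polynomial be monic.

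The principal obstacle is establishing the Gaussian identity $\langle e_p(AXX^*)\rangle_X = \frac{N!}{(N-p)!}e_p(A)$, since \cite{FR09} records only the real-symmetric analogue invoked in Proposition~\ref{P2}. The cleanest route is to diagonalise $A = V\operatorname{diag}(a_1,\dots,a_N)V^*$, use unitary invariance of the standard complex Gaussian law to reduce to the diagonal case, and then expand $e_p$ as a sum of principal minors of $XX^*$ and apply Wick's theorem; this recovers $\langle e_p(XX^*)\rangle = \binom{N}{p}^2 p!$ when $A = I_N$. Equivalently, the Schur-polynomial integration formula applied with $\lambda = (1^p)$ and the normalisation $s_{(1^p)}(I_N) = \binom{N}{p}$ gives the same identity.
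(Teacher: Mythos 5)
Your approach is correct and structurally parallel to the paper's: expand the characteristic polynomial in elementary symmetric functions $e_p$, factor the Gaussian average from the Haar average, evaluate the Gaussian piece via the Ginibre moment formula, and reduce the Haar average to a tridiagonal Toeplitz determinant. The one genuine difference is in the last step: the paper evaluates the tridiagonal determinant by factoring out $(1+2\zeta)$ and applying the explicit expansion $\det(\mathbb I_N + B) = \sum_{p}\binom{N-p}{p}(-1)^p x^{2p}$, then extracts coefficients; you instead recognize the determinant directly as the Chebyshev polynomial $U_N\bigl(\frac{t+2}{2}\bigr)$ and invoke the classical ${}_2F_1$ representation of $U_N$, which lands you on the ${}_2F_2$ in one stroke. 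This is slightly cleaner and makes the hypergeometric structure transparent without manipulating binomial sums. Your worry that the Gaussian identity $\langle e_p(A\,XX^*)\rangle_X = \tfrac{N!}{(N-p)!}\,e_p(A)$ is not in \cite{FR09} is unfounded: the paper already reads $\tfrac{1}{e_p(\mathbb I_N)}\langle e_p(X^*X)\rangle_X = N!/(N-p)!$ directly from [FR09, eq.~(3.13)] (the complex Ginibre case; eq.~(3.9) is the real one used for Proposition~\ref{P2}), so this step needs no new argument.

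One caution: if you actually carry out your final bookkeeping rather than appealing loosely to monicity, you will find the argument of the ${}_2F_2$ to be $\lambda/4$, not $4\lambda$. This is in fact a typo in the stated Proposition: from $(1-y)/2\big|_{y=(t+2)/2}=-t/4$ the Chebyshev representation yields $U_N\bigl(\tfrac{t+2}{2}\bigr) = (N+1)\,{}_2F_1(-N,N+2;3/2;-t/4)$, and propagating this through gives $(-1)^N(N+1)!\,{}_2F_2(-N,N+2;1,3/2;\lambda/4)$. A quick sanity check at $N=1$ confirms it: a direct computation gives $\langle\det\rangle = \lambda - \langle|1+U|^2\rangle\langle|X|^2\rangle = \lambda - 2$, which matches $-2\,{}_2F_2(-1,3;1,3/2;\lambda/4) = -2(1-\lambda/2)$ but not the version with $4\lambda$.
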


\noindent
Proof. \quad We proceed as in the proof of Proposition \ref{P2}. Thus we write
\be \label{G1a}
\det (\lambda \mathbb I_{N} - (1+U)XX^{*}(1+U^*) ) =
\sum_{p=0}^{N} \lambda^{N-p} (-1)^p e_p((1+U)XX^{*}(1+U^*)),
\ee
showing that our task is to compute the matrix averages $\langle  e_p((1+U)XX^{*}(1+U^*)) \rangle_{X,U}$, where $X$ is drawn from the
set of $N \times N$ complex standard  Gaussian random matrices, and $U$ from the set of unitary matrices with Haar measure.

Analogous to (\ref{G2}) we have
\be \label{G2a}
\langle  e_p((1+U)XX^{*}(1+U^*)) \rangle_{X,U} = { \langle e_p((1+U)(1+U^*) \rangle_U  \over e_p(\mathbb I_{N})} \langle  e_p( X^T  X) \rangle_X .
\ee
Furthermore, we read off from \cite[(3.13)]{FR09} that
 \be \label{G4a}
 {1 \over e_p(\mathbb I_{N})} \langle  e_p(X^T  X) \rangle_X =   \prod_{j=1}^p (N - (j-1)).
 \ee
To compute the average over $U$, we first note that
$$
\sum_{p=0}^N \zeta^p \langle e_p((1+U)(1+U^*) \rangle_U =  \Big  \langle  \det \Big ( \mathbb I_{N} + \zeta (1+U)(1+U^*) \Big )  \Big \rangle_U
$$
In terms of the eigenvalues of $U$, $\{z_j\}_{j=1,\dots,N}$ say with $|z_j|=1$, the determinant factorizes to read
$$
  \det \Big ( \mathbb I_{N} + \zeta (1+U)(1+U^*) \Big ) = \prod_{j=1}^N \Big ( 1 + 2 \zeta + \zeta (z_j + {1 \over z_j}) \Big ).
$$
Recalling now that any average over the unitary group involving a product of the individual eigenvalues can
be written as a Toeplitz determinant (see e.g.~\cite[eq.~(5.76)]{Fo10}), and noting that in the present case the
elements in position $(jk)$ of the latter are given by
$$
{\rm CT} \,  \Big ( 1 + 2 \zeta + \zeta (z + {1 \over z } ) \Big ) z^{j-k} =
\left \{ \begin{array}{ll} \zeta, & |j-k|=1 \\
1 + 2 \zeta, & j=k \\
0, & {\rm otherwise} \end{array} \right.,
$$
where CT denotes the constant term in the Laurent expansion with respect to $z$,
we see that our task is reduced to one of evaluating a symmetric tridiagonal Toeplitz determinant  which is constant down both the diagonal
and its two neighbours.

We can verify that with $B$ the symmetric $N \times N$ matrix with entries $x$ in positions with $|j-k|=1$ and entries 0 elsewhere
$$
\det ( \mathbb I_N + B) = \sum_{p=0}^{[N/2]} \binom{N-p}{p} (-1)^p x^{2p}.
$$
This allows us to deduce that
$$
  \Big  \langle \det \Big ( \mathbb I_{N} + \zeta (1+U)(1+U^*) \Big )\Big  \rangle_{U} =  \sum_{p=0}^{[N/2]} \binom{N-p}{p}
  (-1)^p \zeta^{2p} (1 + 2 \zeta)^{N-2p}.
$$
Extracting the coefficient of $\zeta^p$ from this gives a sum over binomial coefficients which can be
evaluated, and we conclude
\be   \label{G4b}
\langle  e_p((1+U)XX^{*}(1+U^*)) \rangle_{X,U}   = { (2N - p + 1)! \over
(2N-2p + 1)! p!}.
\ee

Substituting (\ref{G4a}) and (\ref{G4b}) in (\ref{G1a}) we obtain
\be \label{G7}
 \Big  \langle \det (\lambda \mathbb I_{N} - (1+U)XX^{*}(1+U^*)) \Big  \rangle_{X,U} = (-1)^N N! \sum_{q=0}^N {(N+q+1)! \over q! (2q+1)! (N-q)!}  (-\lambda)^q.
\ee
Noting that
$$
{N! \over (N-q)!} = (-1)^q (-N)_q, \quad
(2q+1)! = 2^{2q}(1)_q (3/2)_q, \quad
(N+q+1)! = (N+1)! (N+2)_q
$$
we see that (\ref{G7}) is equivalent to (\ref{G5b}).
\hfill $\square$

\medskip
Replacing $\lambda$ by $Nx/4$, it follows from (\ref{G5b}) and (\ref{DpFq}) that to leading order in $N$
$$
x \Big ( \Big (x {d \over dx} \Big )^2 - N^2 \Big ) p = \Big (  x {d \over dx} \Big )^3 p.
$$
Now making use of (\ref{HN}) in the cases $k=2,3$ and recalling (\ref{3.4}) it follows that the limiting resolvent,
$g$ say, satisfies the cubic equation
$$
x ( (xg)^2 - 1) = (x g)^3,
$$
which with $xg = w$ is precisely (\ref{Gequation}) in the case $p=3/2$, $r=1/2$.

\begin{remark}\label{remarkbures}
The eigenvalues of the random matrix $(1+U) X X^* (1 + U^*)$ are equal to the squared singular values of $(1+U)X$. What can we
say about the squared singular values of $(1+U) X X_1 \cdots X_s$, where $X$ and each $X_i$ is a standard complex Gaussian matrix, and
$U$ is a random unitary matrix chosen with Haar measure? We are asking for $W_{3/2,1/2} \boxtimes W_{s+1,1}$. Applying (\ref{WM}) tells
us that this is equal to $ W_{(3+s)/2,1/2}$. In particular, we have \cite{mpz}
\be \label{W212}
W_{2,1/2}(x)=\frac{\sqrt{2-\sqrt{x}}}{2\pi x^{3/4}}, \qquad 0<x<4.
\ee
\end{remark}

\section{Equilibrium problem}\label{S3}
In Section \ref{S3.3} a realization of the Raney distribution with parameters $p=3$, $r=2$ as the eigenvalue density for the
Pearcey process was given. It was commented that the corresponding symmetric density $\tilde{w}_{3,2}(y)$ as specified
by (\ref{ND}) can be found in the thesis of Nadal \cite{Na11}. In the latter it is furthermore shown that $\tilde{w}_{3,2}(y)$ is the density
$\rho_{(1)}(x)$ which minimizes the energy functional
\begin{align}\label{21.1}
\tilde{E}_{3,2}[\rho_{(1)}(x)] =&
\int_0^L (x + 1/\sqrt{2})^2 \rho_{(1)}(x) \, dx  \nonumber \\
& -{1 \over 2}
\int_0^L dx \int_0^L dx' \,\rho_{(1)}(x) \rho_{(1)}(x') \log \Big ( | x - x'| |x^2 - (x')^2| \Big ),
\end{align}
subject to the normalization $\int_0^L \rho_{(1)}(x) \, dx = 1/2$.
The functional (\ref{21.1}) is very revealing when compared against the energy functional known for the
Raney distribution with $p=3$, $r=1$ using squared variables, which we know has density (\ref{3.6}). It is shown
in \cite{LSZ06} that
 $\tilde{w}_{3,1}(y)$ is the density
$\rho_{(1)}(x)$ which minimizes the energy functional
\begin{align}\label{21.2}
\tilde{E}_{3,1}[\rho_{(1)}(x)] = &
\int_0^L x \rho_{(1)}(x) \, dx   \nonumber \\
& -
{1 \over 2}
\int_0^L dx \int_0^L dx' \,\rho_{(1)}(x) \rho_{(1)}(x') \log \Big ( | x - x'| |x^2 - (x')^2| \Big ),
\end{align}
subject to the normalization $\int_0^L \rho_{(1)}(x) \, dx = 1/2$.

We thus see that both (\ref{21.1}) and (\ref{21.2}) are special cases of the family of energy functionals
\begin{align}\label{21.3}
\tilde{E}(V;\theta)[\rho_{(1)}(x)] = &
\int_0^L V(x) \rho_{(1)}(x) \, dx   \nonumber \\
&-{1 \over 2}
\int_0^L dx \int_0^L dx' \,\rho_{(1)}(x) \rho_{(1)}(x') \log \Big ( | x - x'| |x^\theta  - (x')^\theta | \Big ),
\end{align}
for $\theta = 2$ and particular $V$. In the case $\theta \in \mathbb Z^+$ by writing
$ |x^\theta  - (x')^\theta | = \prod_{k=0}^{\theta-1}  | x - \omega^{k} x'|$ where $\omega = e^{2 \pi i / \theta}$ we can
interpret this as the Boltzmann factor for a log-potential Coulomb gas on the half line at inverse temperature
$\beta = 2$, with image particles of charge $+1/2$ along rays in the direction of $\omega^k$ for $k=1,\dots,\theta - 1$
(see \cite[\S 3.1.4]{Fo10} for a related interpretation of an energy functional).
The significance of (\ref{21.3}) is that in the case that $V(x)$ is such
that $\rho_{(1)}(x)$ is supported on $(0,L)$ (one-cut assumption with hard edge) or $(L_0,L)$ with $0<L_0<L$ the
equilibrium problem for (\ref{21.3}) has been recently solved by Claeys and Romano \cite{CR13} for all $\theta \ge 1$.
Examination of their results reveals an intimate relationship with the Raney distribution, and in fact allows us to
specify the equilibrium problem for all Raney distributions with $r=1$ (Fuss-Catalan case) and also to provide a
qualitative understanding of the cases $r \in \mathbb Z^+$ for $r < p$.

First, some notation is needed. With $\theta \ge 1$ the same parameter as in (\ref{21.2}), define
\begin{equation}\label{JT}
J(s) = (s + 1) \Big ( {s + 1 \over s} \Big )^{1/\theta} \theta.
\end{equation}
As done in \cite{CR13}, it is easy to verify that there are two complex conjugate curves $\gamma_\pm$, in the upper
and lower half plane respectively, which join $s=-1$ and $s=1/\theta$ and are such that $J(s)$ is real for $s \in \gamma_\pm$.
Define $I_\pm(x) \in \gamma_\pm$ by the requirement that
\begin{equation}\label{JI}
J(I_\pm(x)) = x.
\end{equation}
And with $\rho_{(1)}(x)$ the density minimizing (\ref{21.3}), and $H_\theta = \{ z \in \mathbb C: - \pi/\theta < {\rm arg} \, z < \pi/\theta \}$
define
\begin{equation}\label{JT1}
\tilde{g}(z) = \int_0^L \log (z^\theta - y) \rho_{(1)}(y) \, dy, \qquad z \in H_\theta \backslash [0,\infty)
\end{equation}
and
\begin{equation}\label{JT2}
\tilde{G}(z) = {d \over d z} \tilde{g}(z).
\end{equation}

\begin{proposition} \cite[\S 4.5.1]{CR13}
Let $V(x) = x$ and set $L= (1 + \theta)^{1 + 1/\theta}$. Then the Green's function (\ref{JT2}), as it approaches the interval $[0,L]$
from the upper plane ($+$) or the lower plane ($-$) is given by
\begin{equation}
x \tilde{G}_\pm (x) = \theta ( I_\mp(x) + 1).
\end{equation}
\end{proposition}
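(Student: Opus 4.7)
The plan is to derive the claim from the Euler--Lagrange equations for the minimizer of (\ref{21.3}) and then recognize the resulting algebraic equation as parameterized by the map $J$ of (\ref{JT}). First I would write the variational condition with Lagrange multiplier $\ell$:
\begin{equation*}
V(x) - \int_0^L \log|x-y|\,\rho_{(1)}(y)\,dy - \int_0^L \log|x^\theta - y^\theta|\,\rho_{(1)}(y)\,dy \;=\; \ell
\end{equation*}
on the support. For $V(x)=x$, differentiating in $x$ gives, on $(0,L)$,
\begin{equation*}
1 \;=\; \mathrm{PV}\!\int_0^L \frac{\rho_{(1)}(y)}{x-y}\,dy \;+\; \mathrm{PV}\!\int_0^L \frac{\theta\,x^{\theta-1}}{x^\theta - y^\theta}\,\rho_{(1)}(y)\,dy.
\end{equation*}

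Next I would translate this into boundary values. Setting $G(z) = \int_0^L \rho_{(1)}(y)/(z-y)\,dy$ and recalling $\tilde G$ from (\ref{JT2}), Sokhotski--Plemelj gives $\tfrac12(G_+ + G_-) = \mathrm{PV}$-integral against $\rho_{(1)}$, and likewise for $\tilde G$. So the variational equation is the scalar condition
\begin{equation*}
(G_+ + G_-)(x) + (\tilde G_+ + \tilde G_-)(x) \;=\; 2, \qquad x \in (0,L).
\end{equation*}
The key observation is that $\tilde G$ is the equivariant sum
$\tilde G(z) = \sum_{k=0}^{\theta-1} \omega^{-k} G(\omega^{-k} z)$ with $\omega = e^{2\pi i/\theta}$,
so the variational equation plus the analyticity of $\tilde G$ on $H_\theta\setminus[0,\infty)$ determines $\tilde G$ via a Riemann--Hilbert problem on a single real cut. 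Eliminating $G_\pm$ and using the jump $\tilde G_+ - \tilde G_- = -2\pi i \theta x^{\theta-1}\rho_{(1)}(x)$, one obtains a single polynomial equation of the form $\Phi(x\tilde G(z),z)=0$ satisfied globally; this is the generalization of (\ref{bp2}) to arbitrary $\theta\geq 1$ (which for $\theta=2$ reproduces the Raney case of Section \ref{S3.3}).

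The third step is to verify the parameterization. Writing $w := \theta(s+1)$ and $x := J(s)$, one has $x/\theta = (s+1)((s+1)/s)^{1/\theta}$, i.e.\ $(w/\theta)^\theta = (x/\theta)^\theta\,s/(s+1)$, which can be rearranged into precisely the algebraic equation derived in the previous paragraph for $w = x\tilde G$. Hence $s \mapsto (J(s),\theta(s+1))$ traces out the graph of $x\tilde G$. The two branches of $\tilde G$ across the cut must come from the two conjugate arcs $\gamma_\pm$ on which $J$ is real, and a check of endpoint values confirms the assignment: at $s=-1$ one has $x=0$ and $w=0$, while at $s=1/\theta$ one has $x=L=(1+\theta)^{1+1/\theta}$ and $w=1+\theta$, consistent with $\tilde G(z)$ being real there. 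Since $H_\theta$ lies to the right of $\gamma_+$ in the $s$-plane and $J$ is conformal off $\gamma_\pm$, the upper-plane approach $z \to x^+$ in the $z$-plane corresponds to $s \in \gamma_-$, which is precisely $s = I_-(x)$; this yields $x\tilde G_+(x) = \theta(I_-(x)+1)$, and complex conjugation gives the $-$ case.

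The main obstacle will be step two: unambiguously producing the algebraic equation satisfied by $x\tilde G$ from the single scalar jump condition on $(0,L)$, because $\tilde G$ is not a standard Stieltjes transform but the equivariant sum over $\theta$-th roots of unity. The cleanest route is the Riemann--Hilbert formulation used by Claeys and Romano, in which the $\theta$ conjugate branches of $G$ on the $\theta$ rays $\omega^k[0,L^{1/\theta}]$ are organized into a single matrix problem on a contour of star shape; the variational equation becomes a scalar relation on one ray, and the $\mathbb{Z}/\theta$-symmetry propagates it to the entire star, yielding the algebraic equation after explicit resolution. Everything else is algebraic manipulation or a routine boundary check.
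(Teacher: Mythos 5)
This Proposition is not proved in the paper at all: it is imported verbatim as a result of Claeys and Romano, as the citation \cite[\S 4.5.1]{CR13} attached to the statement indicates. The paper's ``proof'' is therefore a reference, and your attempt to reconstruct the argument from the variational principle and a Riemann--Hilbert reduction is a genuinely different undertaking. Your outline does match the broad shape of the Claeys--Romano derivation (Euler--Lagrange condition on $(0,L)$, translation into boundary values of the two Cauchy-type transforms, a single algebraic relation for $x\tilde G$, and identification of the two branches with the arcs $\gamma_\pm$ via $J$), and your step-3 algebra is correct: with $w=\theta(s+1)$ and $x=J(s)$ one indeed has $(w/\theta)^\theta = (x/\theta)^\theta\,s/(s+1)$, and the endpoint values $s=-1\mapsto(x,w)=(0,0)$ and $s=1/\theta\mapsto(x,w)=(L,1+\theta)$ check out.

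That said, the proposal is a roadmap rather than a proof. The crux --- passing from the scalar jump condition $(G_++G_-)+(\tilde G_++\tilde G_-)=2$ on $(0,L)$, together with analyticity of $\tilde G$ on $H_\theta\setminus[0,\infty)$, to a single closed polynomial relation $\Phi(x\tilde G(z),z)=0$ --- is exactly the non-elementary content of the Claeys--Romano analysis, and you explicitly defer it to ``the Riemann--Hilbert formulation used by Claeys and Romano''. So the difficult step is not supplied; it is cited, just as the paper cites the end result. In addition, the decomposition $\tilde G(z)=\sum_{k=0}^{\theta-1}\omega^{-k}G(\omega^{-k}z)$ with $\omega=e^{2\pi i/\theta}$ only makes sense for $\theta\in\mathbb Z^+$, whereas the statement is asserted for all $\theta\ge 1$; for non-integer $\theta$ the ``star'' contour and the $\mathbb Z/\theta$ propagation argument you describe have no direct analogue, and one must instead work with the two scalar transforms $g,\tilde g$ and the mapping $J$ directly, as Claeys and Romano do. Finally, be careful that this identification reproduces $\theta z^{\theta-1}/(z^\theta-y^\theta)$ as the kernel, whereas (\ref{JT1})--(\ref{JT2}) as written in the paper carry $z^\theta-y$; you should reconcile this (most likely a transcription of the $y\mapsto y^\theta$ change of variable from \cite{CR13}) before asserting that your variational equation is the boundary-value form of the Green's function appearing in the Proposition.
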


Recalling (\ref{JT}) and (\ref{JI}) we thus have
\begin{equation}\label{6.6}
( x \tilde{G}_\pm(x)/\theta)^{1 + 1/\theta} {1 \over (x \tilde{G}_\pm(x)/\theta - 1)^{1/\theta} }=  x.
\end{equation}
Let us now replace $x$ by $x^{1/\theta}$ and write $x^{1/\theta} \tilde{G}_\pm( x^{1/\theta})/\theta = w$.
Minor manipulation of (\ref{6.6}) then gives
\begin{equation}
w^{\theta + 1} = x(w - 1),
\end{equation}
which we recognize as being identical to the equation (\ref{Gequation}) in the case $\theta + 1 = p$, $r=1$, uniquely determining the
Fuss-Catalan distribution. As a consequence, we have identified an equilibrium problem for this distribution.

\begin{corollary}
The energy functional
\begin{align}\label{AT}
E_\theta[\rho_{(1)}(y)] =&  \int_0^L
y^{1/\theta} \rho_{(1)}(y)  \, dy  \nonumber \\
&-{1 \over 2} \int_0^L dy \int_0^L dy' \,\rho_{(1)}(y) \rho_{(1)}(y') \log \Big (|y^{1/\theta}  - (y')^{1/\theta} |  | y - y'|  \Big ),
\end{align}
subject to the normalization $\int_0^L \rho_{(1)}(y) \, dy =1$  is minimized by the density function for the Raney distribution
$p=\theta + 1$, $r=1$, or equivalently Fuss-Catalan distribution with $s=\theta$.
\end{corollary}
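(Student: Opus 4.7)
The plan is to reduce $E_\theta$ to the Claeys--Romano functional (\ref{21.3}) with $V(x) = x$ by a change of variables, and then to translate the algebraic equation for the resolvent supplied by the preceding Proposition back to the $y$-variable, using essentially the computation already performed in the passage from (\ref{6.6}) to $w^{\theta+1} = u(w-1)$.

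First I would perform the substitution $y = x^\theta$ in $E_\theta[\rho_{(1)}]$. Setting $\sigma(x) := \theta\, x^{\theta-1}\rho_{(1)}(x^\theta)$ one has $\sigma(x)\,dx = \rho_{(1)}(y)\,dy$, so the mass is preserved; the linear term becomes $\int x\,\sigma(x)\,dx$; and the pointwise identities $|y^{1/\theta}-(y')^{1/\theta}| = |x-x'|$ and $|y-y'| = |x^\theta - (x')^\theta|$ show that the logarithmic double integral retains its form. Hence $E_\theta[\rho_{(1)}] = \tilde E(V;\theta)[\sigma]$ with $V(x) = x$, and $\rho_{(1)}$ minimizes $E_\theta$ (subject to its normalization) if and only if $\sigma$ minimizes $\tilde E(V;\theta)$ (subject to the corresponding normalization).

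Next I would transport the algebraic equation (\ref{6.6}) back to the $y$-variable. A direct calculation from (\ref{JT1})--(\ref{JT2}) gives $\tilde G(z) = \theta\, z^{\theta-1}\, G_\mu(z^\theta)$, where $G_\mu(u) := \int_0^L \rho_{(1)}(y)/(u-y)\,dy$ is the Stieltjes transform of $\rho_{(1)}$. Hence with $u = z^\theta$ one has $z\,\tilde G(z)/\theta = u\,G_\mu(u)$, and setting $w = u\,G_\mu(u)$ the equation (\ref{6.6}) reduces, after raising to the $\theta$-th power, to $w^{\theta+1} = u(w-1)$, which is precisely (\ref{Gequation}) with $p = \theta+1$, $r=1$. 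Since that algebraic equation, together with the normalization $G_\mu(u) \sim 1/u$ as $|u|\to\infty$, uniquely characterizes the Stieltjes transform of the Fuss--Catalan distribution $W_{\theta+1,1}$, one concludes $\rho_{(1)} = W_{\theta+1,1}$.

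The main obstacle is the bookkeeping of constants. Specifically, one must check that the endpoint $L$ in the statement of the corollary matches $K_{\theta+1} = (\theta+1)^{\theta+1}\theta^{-\theta}$; that the correct branch of the multivalued $w$ is singled out in passing from (\ref{6.6}) to $w^{\theta+1} = u(w-1)$, consistent with the parametrization (\ref{1.5}); and that the unit-mass convention adopted in the corollary is reconciled with the mass-$1/2$ convention of (\ref{21.3}) (alternatively, by viewing the Claeys--Romano statement as being valid for general prescribed mass, in which case this point is moot). Once these are set up, the proof is a direct transcription of the computation already sketched before the statement of the corollary.
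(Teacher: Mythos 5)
Your plan reproduces, in slightly expanded form, exactly the argument the paper makes in the passage from the Proposition of Claeys--Romano through (\ref{6.6}) to the Corollary: the substitution $y = x^\theta$ identifies $E_\theta$ with the Claeys--Romano functional $\tilde E(V;\theta)$ with linear potential, the Green's function identity (\ref{JT1})--(\ref{JT2}) converts $\tilde G$ into the Stieltjes transform of the $y$-variable density, and raising (\ref{6.6}) to the $\theta$-th power yields $w^{\theta+1}=u(w-1)$, i.e.~(\ref{Gequation}) with $p=\theta+1$, $r=1$, whose unique solution with $w\sim 1$ at infinity pins down the Fuss--Catalan measure. The bookkeeping points you flag (branch selection, $L$ versus $K_{\theta+1}$, unit-mass versus half-mass conventions) are indeed the same loose ends the paper leaves to the reader, so this is essentially the paper's own proof.
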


 \begin{remark}
Changing variables $y= x^{\theta}$, the point process corresponding to (\ref{AT}) has its joint probability density function
proportional to
$$
\prod_{j=1}^N e^{- x_j} \prod_{1 \le j < k \le N} |x_j - x_k| |x_j^\theta - x_k^\theta|.
$$
This was introduced into random matrix theory by Muttalib \cite{Mu95} (for a realization as in terms of a Wishart matrix
formed out of triangular matrices, see the very recent work \cite{Ch14}), and the correlations at the hard edge subsequently
computed by Borodin \cite{Bo98}. Specifically, with Wright's generalized Bessel function defined as
$$
J_{a,b}(x) = \sum_{m=0}^\infty {(-x)^m \over m! \Gamma(a+bm)},
$$
it was shown in \cite{Bo98} that
$$
\rho_{(1)}(x) = \theta  \int_0^1 J_{1/\theta,1/\theta}(xt) J_{1,\theta}((xt)^\theta) \, dt.
$$
In keeping with Remark \ref{R2.6}, we expect a matching of the corresponding large $x$ form
with (\ref{2.16}) in the case $r=1$,
\begin{equation}
\rho_{(1)}(x) \, dx \Big |_{y= x^{\theta}} \mathop{\sim}\limits_{x \to \infty} {1 \over \pi}\sin  \Big (  {\pi \over \theta + 1}\Big ) y^{-1 + 1/(\theta + 1)} \, dy.
\end{equation}
\end{remark}

\medskip

We next turn our attention to the case $r=2$ of the Raney distributions. In the case $\theta = 2$, and using squared variables,
we know that the corresponding energy functional is given by (\ref{21.1}). In \cite[\S 4.5.2]{CR13} it is argued (without giving a
rigorous proof) that for the energy functional (\ref{21.3}) with potential $V(x) = (x - c)^2$, there is a value of $c$ for which the lower support
of the equilibrium
density goes from being positive to zero (or equivalently soft to hard edge), and for this value of $c$ the density is proportional to
$x^{(\theta - 1)/(\theta + 1)}$. Changing variables $y = x^{\theta}$ as done in arriving at (\ref{AT}), the density becomes proportional to
$y^{-(\theta - 1)/(\theta + 1)}$ which is in agreement with (\ref{2.16}) in the case $p=\theta + 1$, $r=2$.

\begin{remark}
Another way for the exponent $(\theta - 1)/(\theta + 1)$ to change sign is to replace $\theta$ by $1/\theta$.
With $0 < \theta < 1$, the sign changed exponent for the singular behaviour of a density specifed by an
energy functional can be found in the works of
Zinn-Justin \cite{ZJ00} and Kostov \cite{Ko00},  relating to the matrix model formalism of the six-vertex model on a random lattice.
With $\omega = e^{-i \pi \theta/2}$, $0 < \theta < 1$, the energy functional is
$$
E_\theta[\rho_{(1)}(y)] =  \int_0^L
(y - c)^2 \rho_{(1)}(y)  \, dy  
-{1 \over 2} \int_0^L dy \int_0^L dy' \,\rho_{(1)}(y) \rho_{(1)}(y') \log \Big | {y - y' \over \omega y + \omega^{-1} y'} \Big |,
$$
where $c$ is to be tuned so that the lower boundary of support goes from being positive to zero. In \cite{ZJ00}, the $n$-th moment of
$\rho_{(1)}(y)$, after suitable scaling of $y$, is computed to be equal to 
$$
2 {\Gamma (1 + (1 + \theta)n/2)  \Gamma (1 + (1 - \theta)n/2)  \over \Gamma(3 + n)}
$$
(cf.~(\ref{raneynumber})).
\end{remark}

\begin{remark}
This (heuristic) understanding of the Raney distribution for $r=2$ suggests that a realization of the general $r \in \mathbb Z^+$ case,
$r < p$, results from the equilibrium problem (\ref{21.3}) with $V(x)$ given by a certain ``tuned" degree $r$ polynomial in the
soft-to-hard transition (for ``tuned" polynomial potentials in the case of one-cut matrix models generalizing the Gaussian
ensembles --- so called critical unitary random matrix ensembles --- see \cite[Eq.~(1.8)]{CO11}). One would expect that it is
also necessary to change variables $y = x^{\theta}$, and identify $p=\theta + 1$ as required for $r=1$ and $r=2$.
\end{remark}

\begin{remark}
The equilibrium problem for the Raney distribution with parameters $p=3/2$, $r=1/2$ is known from \cite{sz}. Thus one knows that
the corresponding density $\rho_{(1)}(x)$ minimizes the energy functional
  \begin{align}\label{21.2a}
{E}_{3/2,1/2}[\rho_{(1)}(x)] = &
\int_0^L x \rho_{(1)}(x) \, dx   \nonumber \\
& -
{1 \over 2}
\int_0^L dx \int_0^L dx' \,\rho_{(1)}(x) \rho_{(1)}(x') \log \Big ({|x - x'| \over |x + x'|} |x^\theta - (x')^\theta|  \Big ),
\end{align}
with $\theta = 1$ and subject to the normalization $\int_0^L \rho_{(1)}(x) \, dx = 1$. An obvious question is to ask if
this same energy functional characterizes the Raney distribution with parameters $p=\theta/2 + 1$, $r=1/2$, and
if changing the linear potential $x$ in the first term to a specially tuned degree $k$ polynomial allows for $r$ to
be varied from $1/2$. The case $\theta = 2$ of (\ref{21.2a}) is the well known energy functional for the Marchenko-Pastur
law (\ref{MP1}), i.e.~the case $\theta = 1$ of (\ref{AT}). On the other hand the density for the Raney distribution with parameters
$p=2$, $r=1/2$ is given by (\ref{W212}). This agrees with (\ref{MP1}) upon the change of variables $y = x^{\theta}$, so as
found for  (\ref{21.3}), this change of variables will also be required as a final step.
 
\end{remark}


 \section{A further class of polynomial equations}
 In the case $r=1$, $p > 1$ and integer,  the   general algebraic equation (\ref{Gequation}) specializes to the polynomial
 equation $w^p - z w + z = 0$. Here we will show that a random matrix structure recently considered in \cite{Fo14},
 involving the product of standard complex Gaussian matrices and their inverses, leads to a natural generalization of this
 equation for the resolvent.

For  $s, q\in \mathbb{N}_{0}=\{0,1,2,\ldots\}$, let $X_{1},\ldots, X_{s}, \widetilde{X}_{1},\ldots, \widetilde{X}_{q}$  be
independent standard complex Gaussian matrices.
Moreover,  let $X_{j}$ and $ \widetilde{X}_{k}$  be of dimension $N_{j}\times N_{j-1}$ ($j=1,2,\ldots, s$)
and $\widetilde{N}_{k}\times \widetilde{N}_{k-1}$ ($k=1,2,\ldots, q$) respectively.
Require that $N_0=\min\{N_0, \ldots, N_s\}$,  $\widetilde{N}_0=\min\{\widetilde{N}_0, \ldots, \widetilde{N}_q\}$, and furthermore that $N_0=\widetilde{N}_0=N$. With these specifications, introduce the products
$$Y_{s}=X_{s}X_{s-1} \cdots  X_{1}, \qquad \widetilde{Y}_{q}=\widetilde{X}_{q} \widetilde{X}_{q-1}\cdots \widetilde{X}_{1},$$
and use these to define product
Wishart-type  matrices involving inverse Gaussians according to
\be \label{inverseproduct}A_{s,q}=   \big(\widetilde{Y}_{q}^{*}\widetilde{Y}_{q}\big)^{-1/2} \big(Y_{s}^{*}Y_{s}\big) \big(\widetilde{Y}_{q}^{*}\widetilde{Y}_{q}\big)^{-1/2}.\ee

Another viewpoint  is to replace the rectangular matrix $\widetilde{X}_{k}$ by an $N\times N$ complex Gaussian matrix $\widetilde{X}_{k}$ with distribution proportional to $$\big(\det \widetilde{X}_{k}^{*}\widetilde{X}_{k}\big) ^{\mu_k} e^{-\textrm{tr}\widetilde{X}_{k}^{*}\widetilde{X}_{k}}, $$
where  $\mu_k= \widetilde{N}_{k}-N$.
Then $\tilde{Y}_q$ is a product of square matrices, and as in  \cite{Fo14} we can modify the definition (\ref{inverseproduct}) to a more familiar
Wishart form $((Y_s \tilde{Y}_q)^{-1})^* (Y_s \tilde{Y}_q^{-1})$.

For $q>0$, the averaged characteristic polynomial of $A_{s,q}$ is not well defined, since for example the averaged determinant diverges.
However, as noted in \cite{Fo14}, the eigenvalues of $A_{s,q}$ are the same as the eigenvalues in the generalized eigenvalue problem
$ Y_s^* {Y}_s \vec{v} =  \lambda \widetilde{Y}_{q}^{*}\widetilde{Y}_{q} \vec{v} $. The characteristic polynomial for the latter is
\be \label{Paverage}P^{(s,q)}_{N}(z):=\big\langle\det \big(z\widetilde{Y}_{q}^{*}\widetilde{Y}_{q}-Y_{s}^{*}Y_{s}\big)\big\rangle.\ee
In \cite[Prop.~2]{Fo14} this has been evaluated, telling us that
\be P^{(s,q)}_{N}(z)=(-1)^{N}\prod_{l=1}^{s}(\nu_{l}+1)_{N}\, _{q+1}F_{s}\Big({-\widetilde{N}_{0},-\widetilde{N}_{1},\ldots,-\widetilde{N}_{q} \atop \nu_{1}+1, \ldots, \nu_{s}+1};(-1)^{q}z\Big),\ee
where $\nu_{l}=N_{l}-N$ ($l=1,2,\ldots,s)$.
Introducing the rescaled polynomial $$f(z)=P^{(s,q)}_{N}(\frac{N_{1}\cdots N_{s}}{ \widetilde{N}_{1} \cdots \widetilde{N}_{q}} z),$$
and recalling (\ref{DpFq})
we see  that $f$ satisfies the differential equation
\be \label{scaledpolynomial}\frac{(-1)^{q}}{\widetilde{N}_{1} \cdots \widetilde{N}_{q}}\prod_{j=0}^{q}\big(z\frac{d}{dz}-\widetilde{N}_{j}\big)f=\frac{1}{N_{1}\cdots N_{s}}\frac{d}{d z}\prod_{l=1}^{s}\big(z\frac{d}{d z}+\nu_{l}\big)f.\ee

As was the theme in Section \ref{S3}, we want to introduce the logarithmic derivative $g(z)= f'(z)/f(z)$, which we know from (\ref{3.4}) is
proportional to $N$ for $N$ large, and to take the $N \to \infty$ limit.
For this purpose, let us expand upon the reasoning which leads to (\ref{HN}).
With the aim being   to   express
$(z\frac{d}{dz})^{k}f$ in terms of $g, g', \ldots, g^{_{(k-1)}}$ and $f$, we will give a recursive definition of a polynomial $Q_{k-1}(g)$ in $g, g', \ldots, g^{_{(k-1)}}$
 with coefficients in $\mathbb{C}_{k-1}[z]=\langle 1, z, \ldots,z^{k-1}\rangle_{\mathbb{C}}$.

 First, let $Q_{0}=0$, then we have   \be (z\frac{d}{dz}) f=(zg+zQ_{0})f.\ee
 Secondly, suppose that $Q_{k-1}$ is well defined such that
 \be \label{kderivative} (z\frac{d}{dz})^{k} f=((zg)^{k}+zQ_{k-1})f,\ee
 and set \be \label{recursiverelation}Q_{k}=Q_{k-1}+z(g+\frac{d}{dz})Q_{k-1}+k(zg)^{k-1}(g+zg').\ee
 It is easy to verify that
\be \label{k+1derivative}(z\frac{d}{dz})^{k+1} f=((zg)^{k+1}+zQ_{k})f.\ee
 Moreover,   by induction we know from \eqref{recursiverelation}  that   the term  of highest degree of $Q_{k}$ in $g, g', \ldots, g^{_{(k-1)}}$ is \be \frac{k(k+1)}{2}(zg)^{k-1}(g+zg'),\ee
 which implies the degree of $Q_{k}$ is $k$.

  Let  $e_{k}(\nu_1,\ldots,\nu_s)$ be the $k$-th elementary symmetric polynomial in $\nu_1,\ldots,\nu_s$, as is consistent with (\ref{EY}).
  With \eqref{kderivative} and \eqref{k+1derivative} in mind, expanding both sides of (\ref{scaledpolynomial}) we have
  \begin{multline} \frac{(-1)^{q}}{\widetilde{N}_{1} \cdots \widetilde{N}_{q}}\sum_{j=0}^{q+1}\big((zg)^{j}+zQ_{j-1}(g)\big)e_{q+1-j}(-\widetilde{N}_{0},\ldots,-\widetilde{N}_{q})f\\
=\frac{1}{N_{1}\cdots N_{s}}\sum_{k=0}^{s}\big((zg)^{k}g+Q_{k}(g)\big)e_{s-k}(\nu_{1},\ldots,\nu_{s}) f. \end{multline} Here $Q_{-1}=0$ by convention.
Let $G_{N}(z)=\frac{1}{N}\frac{d}{d z}\log f(z),$  so that $g=NG_{N}$. Substituting it in the above and dividing both sides by $N$ show
\begin{multline}  -\prod_{j=0}^{q}\big(1-\frac{N}{\widetilde{N}_{j}} zG_{N}\big)+ \frac{(-1)^{q}}{\widetilde{N}_{0} \cdots \widetilde{N}_{q}}\sum_{j=0}^{q+1} zQ_{j-1}(N G_{N}) \,e_{q+1-j}(-\widetilde{N}_{0},\ldots,-\widetilde{N}_{q})\\
= G_{N}\prod_{k=1}^{s}\big(\frac{N}{N_{k}} zG_{N}+\frac{\nu_k}{N_k}\big)+\frac{1}{N N_{1}\cdots N_{s}}\sum_{k=0}^{s} Q_{k}(N G_{N})\,e_{s-k}(\nu_{1},\ldots,\nu_{s}). \end{multline}
Since the   degree of $Q_{k}$ is $k$, we can immediately deduce the following proposition.

\begin{proposition} Let $P^{(s,q)}_{N}(z)$ be given in \eqref{Paverage}.  Write  $$G_{N}(z)=\frac{1}{N}\frac{d}{d z}\log P^{(s,q)}_{N}\Big (\frac{N_{1}\cdots N_{s}}{ \widetilde{N}_{1} \cdots \widetilde{N}_{q}} z \Big ),$$
and $$G(z)=\lim_{N\rightarrow \infty}G_{N}(z).$$
Assume that as $N\rightarrow \infty$
\be \frac{N}{\widetilde{N}_{j}} \rightarrow u_{j}\in (0,1] \ \mbox{and} \ \frac{N}{N_{k}} \rightarrow v_{k}\in (0,1] \ \mbox{for}\  j=0, \ldots, q, \: k=1,\ldots,s, \ee
 where $u_0=1$. Then $G(z)$ satisfies  the polynomial equation
\be  \label{algebraicequation} -\prod_{j=0}^{q}\big(1-u_j zG(z)\big)= G(z) \prod_{k=1}^{s}\big(v_k zG(z)+1-v_{k}\big).
\ee \end{proposition}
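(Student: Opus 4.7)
The plan is to extract the polynomial equation for $G(z)$ as the $N\to\infty$ limit of the ODE \eqref{scaledpolynomial}, after rewriting all derivatives of $f$ in terms of the logarithmic derivative $g = f'/f = NG_N$. The recursion for the polynomials $Q_k$ together with the identities \eqref{kderivative} and \eqref{k+1derivative} accomplish exactly this conversion: each occurrence of $(z\tfrac{d}{dz})^{k}f$ is replaced by $((zg)^{k}+zQ_{k-1})f$, so that every derivative hitting $f$ is absorbed into a polynomial expression in $g,g',\ldots,g^{(k-1)}$.

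Concretely, I would begin by inserting \eqref{kderivative}--\eqref{k+1derivative} into both sides of \eqref{scaledpolynomial} and expanding the $(q+1)$-fold and $s$-fold products by means of the elementary symmetric polynomials $e_j(-\widetilde{N}_0,\dots,-\widetilde{N}_q)$ and $e_j(\nu_1,\dots,\nu_s)$. Dividing through by $Nf(z)$ (legitimate for $z$ off the support of the limiting measure, where $f$ has no zeros in the limit) yields precisely the displayed equation immediately preceding the proposition, expressed entirely in $G_N$ and the auxiliary functions $Q_k(NG_N)$.

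The substantive step is the bookkeeping of $N$-powers. Because $Q_k$ is a polynomial of total degree $k$ in the symbols $g, g', \dots, g^{(k-1)}$ (as established inductively from \eqref{recursiverelation}), substituting $g = NG_N$ produces $Q_k(NG_N)=O(N^k)$ uniformly on compact subsets of the resolvent set. The left-hand correction $\tfrac{(-1)^q}{\widetilde{N}_0\cdots\widetilde{N}_q}\,zQ_{j-1}(NG_N)\,e_{q+1-j}(-\widetilde{N}_0,\dots,-\widetilde{N}_q)$ is therefore of order $N^{j-1}\cdot N^{q+1-j}/N^{q+1}=O(N^{-1})$, while the right-hand correction $\tfrac{1}{NN_1\cdots N_s}\,Q_k(NG_N)\,e_{s-k}(\nu_1,\dots,\nu_s)$ is of order $N^k\cdot N^{s-k}/N^{s+1}=O(N^{-1})$, using $\nu_k=N_k-N=O(N)$ under the hypothesis $N/N_k\to v_k\in(0,1]$. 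All correction terms hence disappear in the limit.

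Passing to the limit in the surviving leading pieces, with $N/\widetilde{N}_j\to u_j$, $N/N_k\to v_k$, and consequently $\nu_k/N_k\to 1-v_k$, gives exactly \eqref{algebraicequation}. The one delicate point I expect is the justification that the $Q$-contributions really do vanish rather than conspire at the same scale as the leading $(zg)^{k}$ term; this reduces to uniform boundedness of $G_N$ and its derivatives of each fixed order on compact subsets of the resolvent set, which is implicit in assuming existence of the pointwise limit $G$ and is standard for biorthogonal ensembles of the type considered. Everything else is routine algebraic collection.
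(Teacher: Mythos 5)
Your proposal is correct and follows precisely the paper's own line of argument: expand \eqref{scaledpolynomial} via the $Q_k$ recursion, substitute $g=NG_N$, divide by $N$, and use $\deg Q_k=k$ to show the correction terms are $O(N^{-1})$ before passing to the limit. The delicate point you flag --- that uniform control of $G_N$ and its derivatives on compacts in the resolvent set is needed for the bound $Q_k(NG_N)=O(N^k)$ to be honest --- is also left implicit in the paper, so your account is if anything slightly more scrupulous.
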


\begin{remark}
The equation \eqref{algebraicequation} has been obtained for products of random matrices with general independent entries when $q=0$ \cite{agt2,agt3}, and also for products of Gaussian random matrices involving  inverses when all the $u_j$'s and $v_k$'s are equal  to 1  \cite{Fo14,HM12},  where the tools from free probability theory were used. Actually, the case that all the $u_j$'s are equal to 1 is special. Thus if
 $u_j<1$ for all $j=1, \ldots,q$, then the equation \eqref{algebraicequation}  has  one analytic solution  at infinity such that $zG(z)\rightarrow 1$ as $z\rightarrow \infty$  and thus the corresponding density has compact support. Specially, for $q=s=1$, i.e., the $F$ matrix case in statistics \cite{bs,Fo14}, an explicit density  has been given in Theorem 4.10, \cite{bs} and its limit of $u_1$ and $v_1$ approaching 1 is exactly the case of $s=q=1$ given in \cite{Fo14}. However, if $q>0$ and at least  one of $u_1, \ldots, u_q$ equals 1,  \eqref{algebraicequation}  has no analytic solution at infinity. Actually, we know from free probability theory that in this case the support of the density  is unbounded. To be precise, let $f_{s}(x)$ be the Fuss-Catalan distribution of degree s with support on $(0,K_{s})$,  and let $\tilde{f}_{q}(x)$ be the distribution of  $q$ inverse product of random matrices, then $\tilde{f}_{q}(x)=(1/x^{2}) f_{q}(1/x)$ with support on $(1/K_{q}, \infty)$.  Use of  tools from free probability shows that $f_{s,q}(x)$ is the multiplicative free convolution of $f_{s}(x)$ and $\tilde{f}_{q}(x)$, hence for $q>0$  the support of $f_{s,q}(x)$  is not compact. Furthermore, it is known from \cite[Prop.~6]{HM12} that the density is continuous.
\end{remark}

\begin{remark}
In the case $q=0$, with each $N_j - N = \nu_j$ fixed in the limit $N \to \infty$, differential equations have been shown to also characterize different
observable quantities, namely the gap probabilities at the hard edge. Moreover, these differential equations are nonlinear, and in fact related
to isomonodromy preserving deformations of linear systems \cite{St14}
\end{remark}

 \section{Further development  of the parameterization method}
We start from the polynomial equation \eqref{algebraicequation} satisfied by the Stieltjes transform of  the inverse product \eqref{inverseproduct} and
use   the  parameterization method of the spectral variable, independently  due to Biane and Neuschel, to
give  an explicit form of the limiting density in a special case  when all the $u_j$'s $(j=1,\dots,q)$ and $v_k$'s $(k=1,\dots, s)$ are equal  to 1.
Recently, this same task has been undertaken by Haagerup and M\"oller \cite{HM12} using the strategy of Biane \cite{Ba99}.
We remark too that
 in the case of  $q=s$ the exact form of the density has been computed in
 \cite{Ba98} and independently in \cite{Fo14} without the use of a parametrized spectral variable,
 giving
\be \label{ssdensity} f_{s,s}(x)=\frac{1}{\pi}\frac{x^{-s/(s+1)}\sin  \pi/(s+1)}{ 1+2x^{1/(s+1)}\cos \pi/(s+1)+x^{ 2/(s+1)}}, \qquad 0<x<\infty.\ee
As an application, the  $x \rightarrow 0^+$ leading asymptotics of  the density can be read off to be equal to $\frac{1}{\pi}x^{-\frac{s}{1+s}}\sin\frac{1}{1+s}\pi$.
It  was noticed in  \cite{Fo14} that this is the same form as that for $q=0$, as deduced from the result
(\ref{1.7}) (take $r=1$, $p=s+1$ in the first case of (\ref{2.16})),
 from which it was conjectured that this should be a universal feature valid for general $s, q$ but independent of $q$.    As an application of our
 explicit determination of the density for general $s \ne q$, we are able to obtain the corresponding $x \rightarrow 0^+$ leading asymptotic form,
 and so give an affirmative answer to this conjecture.
 \subsection{Explicit densities}\label{subsectsq}
\subsubsection{General procedure}
Assume that the Stieltjes transform $$G(z)=\int_0^\infty \frac{1}{z-x}f_{s,q}(x) dx$$ of the density $f_{s,q}(x)$ satisfies the equation
\be \label{sqtransform}(zG)^{1+s} +z(1-zG)^{1+q}=0.\ee
 This equation is a limit case of \eqref{algebraicequation} when all $u_i$ and $v_j$ approach  1 from the below. Then the support changes from the compact case to the noncompact one (more
 precisely, as remarked above this happens as soon was one of the $u_j$ becomes equal to 1).
Let $w(z)=zG(z)$. We will try to find two special solutions of  the algebraic equation
  \be \label{sqwequation}w^{1+s} +z(1-w)^{1+q}=0,\ee
one as the cut $0\le z < \infty$ is approached from Im$(z)>0$, the other as it is approached from Im$(z) < 0$,
from which the density immediately follows.

Note that \eqref{sqwequation} can be rewritten as
 \be w^{\frac{1+s}{1+q}} +(-(-z)^{\frac{1}{1+q}})(1-w)=0. \ee If we treat $w$ as a function of the new variable $\hat{z}=-(-z)^{1/(1+q)}$, then \eqref{sqwequation} is just the equation \eqref{Gequation} with $p=(1+s)/(1+q)$, $r=1$ and $z$ is substituted by $\hat{z}$.
 We know that in the variable $\hat{z}$ this has the solution about infinity given by the series in (\ref{1.2a}). But this is not analytic about infinity in the variable $z$ when $q>0$,
 and so the support must be unbounded.

Following Neuschel's strategy, and augmenting this by explicit knowledge of the structure of the parameterization (\ref{1.5}),  we
begin  by seeking a  complex conjugate
pair of solutions of (\ref{sqwequation}) in the  parameterized
polar coordinates  form
\be \label{w=}w_{\pm}=\frac{\sin(a\varphi +\varphi+b)}{\sin( a\varphi)}\, e^{\pm i(\varphi+b)}, \ee
where $a$ and $b$ are to be determined.  Note that this has the property that $w_\pm = 1$ for $\varphi = - b$.
Simple manipulation then gives
\be \label{1-w=}1-w_{\pm}=-\frac{\sin(\varphi+b)}{\sin( a\varphi)}\, e^{\pm i(a\varphi +\varphi+b)}.\ee
Substituting \eqref{w=} and \eqref{1-w=} in \eqref{sqwequation} one establishes the corresponding parameterization of $z$,
\be z=\frac{ (\sin(a\varphi +\varphi+b) )^{1+s}}{ (\sin(\varphi+b) )^{1+q} (\sin( a\varphi) )^{s-q}}\, e^{\mp i (a(1+q)\varphi+q\pi -(s-q)(\varphi+b) )}.\ee

To ensure that $z$ lies in one cut of the real axis we suppose the phase satisfies
$$a(1+q)\varphi+q\pi -(s-q)(\varphi+b)\equiv 2k\pi$$ for some  suitably chosen  $ k\in \mathbb{Z}$. Therefore we get
$$a(1+q)   -(s-q)=0, \qquad q\pi- (s-q)b= 2k\pi,$$
and thus
\be a=\frac{s-q}{1+q}, \qquad b=\frac{2k-q}{q-s}\pi\ (s\neq q).\ee
Supposing that $s\neq q$, it follows from this working that  if we use the parameterization
\be\label{sqpar} x=\rho(\varphi)=\frac{ (\sin(\frac{1+s}{1+q}\varphi +\frac{2k-q}{q-s}\pi) )^{1+s}}{ (\sin(\varphi+\frac{2k-q}{q-s}\pi) )^{1+q}\, (\sin( \frac{s-q}{1+q}\varphi) )^{s-q}},\ee then the complex conjugate pair  of solutions of \eqref{sqwequation} is given by
\be \label{sqsolution} \frac{ \sin(\frac{1+s}{1+q}\varphi +\frac{2k-q}{q-s}\pi) }{  \sin(\frac{s-q}{1+q}\varphi)  }   \, e^{\pm i(\varphi+\frac{2k-q}{q-s}\pi)}.\ee

According to  the inverse formula of the Stieltjes transform  the density function,  $f_{s,q}(x)$ say, is given by
 $$ f_{s,q}(x)=\lim_{\epsilon\rightarrow 0+}\frac{1}{2i\pi}\Big(\frac{w(x-i\epsilon)}{x-i\epsilon}-\frac{w(x+i\epsilon)}{x+i\epsilon}\Big),$$
 where $w$ is one of the above two solutions. Note that  the imaginary parts of $w(z)$ and $z$ have opposite sign;  we choose $w=w_-$ for $s>q$  while $w=w_+$ for $s<q$. We then have  for  $s\neq q$
 \begin{align} \label{sqdensity}f_{s,q}(\rho(\varphi))&= \frac{1}{\pi \rho(\varphi)}\frac{ \sin(\frac{1+s}{1+q}\varphi +\frac{2k-q}{q-s}\pi) }{\mbox{sign}(s-q) \sin(\frac{s-q}{1+q}\varphi)  }   \,  \sin(\varphi+\frac{2k-q}{q-s}\pi) \nonumber\\
&=\frac{1}{\pi}\frac{ (\sin(\varphi+\frac{2k-q}{q-s}\pi))^{2+q} }{\mbox{sign}(s-q) (\sin(\frac{1+s}{1+q}\varphi +\frac{2k-q}{q-s}\pi))^{s}}   \, (\sin(\frac{s-q}{1+q}\varphi))^{s-q-1}.\end{align}
The remaining tasks are the determination of
$k$ as well as  the range of $\varphi$.

\subsubsection{Case  $s>q$} To ensure that the right-hand sides of   \eqref{sqpar} and \eqref{sqdensity} are nonnegative  we must choose an appropriate $k$ and  restrict  the range of $\varphi$.  First, a restriction following from the periodicity of the sine functions is $0<\varphi+\frac{2k-q}{q-s}\pi<\pi$. Second, the nonnegativity of the density suggests that both $ \frac{1+s}{1+q}\varphi+\frac{2k-q}{q-s}\pi$ and $\frac{s-q}{1+q}\varphi$ should belong to $(2l\pi, (2l+1)\pi)$ for some $l\in \mathbb{Z}$. So we can choose $k=q$ for convenience and thus get the range
\be \frac{q}{s-q}\pi<\varphi<\frac{s}{1+s}\frac{1+q}{s-q}\pi.\ee

The final form now follows. This is stated in Proposition \ref{sqdensity} below, where for convenience  $\varphi$ has been replaced by $\varphi+\frac{q}{s-q}\pi$.

\subsubsection{Case  $s<q$}
In this case we rewrite \eqref{sqpar}  as
\be  x=\rho(\varphi)=\frac{ (\sin(\frac{q-2k}{q-s}\pi-\frac{1+s}{1+q}\varphi) )^{1+s}}{ (\sin(\frac{q-2k}{q-s}\pi-\varphi) )^{1+q}\, (\sin( \frac{q-s}{1+q}\varphi) )^{s-q}},\ee
and choose $k=0$.   To ensure that the angles of the sine functions above  lie  in the interval $(0, \pi)$ we must restrict $\varphi$ to the range
\be \frac{s}{1+s}\frac{1+q}{q-s}\pi <\varphi<\frac{q}{q-s}\pi.\ee

 The sought form of the parameterization is specified in Proposition \ref{sqdensity}, where for convenience
$\varphi$ has been replaced by $\frac{q\pi}{q-s}-\varphi$, along with the case $s>q$ of the previous subsection, and the case $s=q$ which can be
checked separately (cf.\eqref{ssdensity}).

\begin{proposition}\label{sqdensity}
Assume   $s,q\geq 0$ and $(s,q)\neq (0,0)$.
If we   use the parameterization (a strictly  decreasing function)
  \be  x=\rho(\varphi)=\frac{ (\sin(\frac{1+s}{1+q}\varphi+\frac{q\pi}{1+q} ) )^{1+s}}{ (\sin  \varphi )^{1+q}\, (\sin( \frac{s-q}{1+q}\varphi+\frac{q\pi}{1+q}) )^{s-q}}, \qquad 0<\varphi<\frac{\pi}{1+s}, \ee
   then
 \begin{align}  f_{s,q}(\rho(\varphi))
&=\frac{1}{\pi}\frac{ (\sin(\frac{s-q}{1+q}\varphi+\frac{q\pi}{1+q}))^{s-q-1} }{ (\sin(\frac{1+s}{1+q}\varphi +\frac{q\pi}{1+q} ))^{s}}   \,  (\sin \varphi)^{2+q}.\end{align}
\end{proposition}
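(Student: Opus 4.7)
The plan is to follow the ``general procedure'' of the subsection, unifying the cases $s>q$ and $s<q$ via linear substitutions in $\varphi$, and checking $s=q$ separately.

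Step one: seek a complex conjugate pair of solutions to $w^{1+s}+z(1-w)^{1+q}=0$ in the polar form
$$w_{\pm}=\frac{\sin(a\varphi+\varphi+b)}{\sin(a\varphi)}\,e^{\pm i(\varphi+b)}$$
with $a,b$ to be determined. A short computation yields $1-w_{\pm}=-\frac{\sin(\varphi+b)}{\sin(a\varphi)}e^{\pm i(a\varphi+\varphi+b)}$; substitution into the algebraic equation then factors out a real prefactor equal to $\rho(\varphi)$ times a phase $a(1+q)\varphi+q\pi-(s-q)(\varphi+b)$ which must be an integer multiple of $2\pi$ so that $z=x$ is real. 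Matching coefficients of $\varphi$ and constants gives $a=(s-q)/(1+q)$ and $b=(2k-q)\pi/(q-s)$, valid whenever $s\ne q$.

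Step two: apply Stieltjes inversion. The pair $w_{\pm}$ are the boundary values of $w(z)=zG(z)$ from the two sides of the cut $(0,\infty)$, so $f_{s,q}(x)=\tfrac{1}{\pi x}\mathrm{Im}\,w_{\epsilon}(x)$ with the sign $\epsilon\in\{\pm\}$ chosen to make the imaginary part positive (as in the excerpt, $\epsilon=-$ for $s>q$ and $\epsilon=+$ for $s<q$). Using the polar form gives the general expression \eqref{sqdensity}. I then fix $k$ and the interval for $\varphi$ so that each sine factor in $\rho(\varphi)$ lies in a single period, ensuring $\rho(\varphi)>0$, $f_{s,q}(\rho(\varphi))\geq 0$, and that $\rho$ bijects onto $(0,\infty)$. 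For $s>q$ the choice $k=q$ with $\varphi\in(\tfrac{q\pi}{s-q},\tfrac{s(1+q)\pi}{(1+s)(s-q)})$ works; for $s<q$ the choice $k=0$ with the mirror interval works.

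Finally, a linear change of variable, $\varphi\mapsto\varphi+\tfrac{q\pi}{s-q}$ when $s>q$ and $\varphi\mapsto\tfrac{q\pi}{q-s}-\varphi$ when $s<q$, identifies both results with the single formula of the proposition on $0<\varphi<\pi/(1+s)$. The boundary case $s=q$ is verified by direct comparison with \eqref{ssdensity}, and monotonicity of $\rho$ on the stated interval is checked by showing $(\log\rho)'$ has constant sign there (analogous to the remark after \eqref{parameterization}). The principal difficulty will be the careful tracking of signs, argument ranges, and branches of the three sine factors across the two cases $s\gtrless q$, ensuring both that $\rho$ maps bijectively onto the entire support $(0,\infty)$ and that $w_{\pm}$ are the correct boundary values of $w(z)$ rather than spurious roots of \eqref{sqwequation}; the latter is pinned down by checking the limiting behaviour as $\varphi$ approaches the endpoints of its interval, corresponding to $x\to 0^+$ and $x\to\infty$.
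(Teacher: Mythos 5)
Your proposal is correct and follows essentially the same route as the paper: the polar ansatz for $w_\pm$, substitution to determine $a=(s-q)/(1+q)$ and $b=(2k-q)\pi/(q-s)$ from the vanishing-phase condition, Stieltjes inversion with the sign choice $w_-$ for $s>q$ and $w_+$ for $s<q$, the choices $k=q$ (resp.\ $k=0$) with the corresponding $\varphi$-intervals, the affine shift in $\varphi$ to unify both cases onto $(0,\pi/(1+s))$, and the separate check of $s=q$ against the closed form for $f_{s,s}$. The only refinement worth adding — which you correctly flag as the principal difficulty — is the explicit verification that $w_\pm$ are indeed the boundary values of the Stieltjes transform (not spurious roots) and that $\rho$ is a strictly decreasing bijection onto $(0,\infty)$, both of which the paper also handles only implicitly via the endpoint limits and the analogue of the monotonicity remark after the Fuss--Catalan parameterization.
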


A direct application  of
Proposition \ref{sqdensity}  gives the explicit leading
  asymptotic form of the density upon the approach of either boundary of its support.

\begin{corollary} \label{s>qcoro} Let  $s, q>0$. We have, for $x \rightarrow 0^+$
\be
f_{s,q}(x) \sim   \frac{1}{\pi}\sin\frac{\pi}{1+s} \, x^{-\frac{s}{1+s}} ,
\ee
while for $x\rightarrow \infty$
 \be
f_{s,q}(x) \sim   \frac{1}{\pi}\sin\frac{\pi}{1+q} \, x^{-\frac{2+q}{1+q}}.
\ee
\end{corollary}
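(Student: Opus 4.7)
The plan is to exploit the explicit parameterization from Proposition \ref{sqdensity} directly. Since $\rho(\varphi)$ is strictly decreasing on $(0,\pi/(1+s))$, the regime $x\to 0^+$ corresponds to $\varphi\to \bigl(\pi/(1+s)\bigr)^-$ while $x\to\infty$ corresponds to $\varphi\to 0^+$. For each boundary one extracts the leading behavior of $\rho(\varphi)$ via Taylor expansion of the sine factors, inverts the relation to express $\varphi$ as a power of $x$, and substitutes into the formula for $f_{s,q}(\rho(\varphi))$.

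For the $x\to\infty$ case I would set $\varphi\to 0^+$. Using $\sin(q\pi/(1+q))=\sin(\pi/(1+q))$, both of the ``shifted'' sine factors in the numerator and in the $(s-q)$-power of the denominator limit to $\sin(\pi/(1+q))$, while the factor $(\sin\varphi)^{1+q}\sim \varphi^{1+q}$ vanishes. Cancelling the common powers of $\sin(\pi/(1+q))$ gives
\begin{equation*}
x\;\sim\;\Bigl(\tfrac{\sin(\pi/(1+q))}{\varphi}\Bigr)^{1+q},\qquad\text{i.e.}\qquad \varphi\;\sim\;\sin\!\tfrac{\pi}{1+q}\,x^{-1/(1+q)}.
\end{equation*}
Substituting this into $f_{s,q}(\rho(\varphi))\sim \tfrac{1}{\pi}(\sin(\pi/(1+q)))^{-q-1}\varphi^{2+q}$ yields the stated leading form $\tfrac{1}{\pi}\sin(\pi/(1+q))\,x^{-(2+q)/(1+q)}$.

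For the $x\to 0^+$ case I would set $\psi=\pi/(1+s)-\varphi\to 0^+$. The arithmetic identities $\tfrac{1+s}{1+q}\cdot\tfrac{\pi}{1+s}+\tfrac{q\pi}{1+q}=\pi$ and $\tfrac{s-q}{1+q}\cdot\tfrac{\pi}{1+s}+\tfrac{q\pi}{1+q}=\tfrac{s\pi}{1+s}$ together with $\sin(\pi-\theta)=\sin\theta$ imply that the first shifted sine is $\sim \tfrac{1+s}{1+q}\psi$, the second tends to $\sin(\pi/(1+s))$, and $\sin\varphi\to\sin(\pi/(1+s))$. Hence
\begin{equation*}
x\;\sim\;\Bigl(\tfrac{1+s}{1+q}\Bigr)^{1+s}\frac{\psi^{1+s}}{(\sin(\pi/(1+s)))^{1+s}},
\end{equation*}
so that $\psi\sim \tfrac{1+q}{1+s}\sin(\pi/(1+s))\,x^{1/(1+s)}$. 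Substituting into the density formula and simplifying the powers of $\sin(\pi/(1+s))$ gives $f_{s,q}(x)\sim \tfrac{1}{\pi}\sin(\pi/(1+s))\,x^{-s/(1+s)}$.

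There is no real obstacle here: the work is entirely routine Taylor expansion plus careful bookkeeping of the common trigonometric factors. The only step that requires a touch of care is the verification of the two identities at $\varphi=\pi/(1+s)$, namely that $\tfrac{1+s}{1+q}\varphi+\tfrac{q\pi}{1+q}$ hits $\pi$ and $\tfrac{s-q}{1+q}\varphi+\tfrac{q\pi}{1+q}$ hits $s\pi/(1+s)$; once these are in hand the cancellations of $(\sin(\pi/(1+q)))$ at one endpoint and $(\sin(\pi/(1+s)))$ at the other produce the clean final answers.
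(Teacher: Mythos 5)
Your proposal is correct and takes the same approach the paper uses (the paper only notes the endpoint correspondences $\varphi\to\pi/(1+s)$, $\varphi\to 0^+$ and says ``a simple computation completes the proof''). Your work supplies the Taylor expansion and cancellation details that the paper leaves implicit, and the arithmetic checks out.
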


\begin{proof} Noting that   $x \to 0 \   \mbox{as} \  \varphi \to  \frac{\pi}{1+s},$
and  $x \to \infty \   \mbox{as} \  \varphi \to   0$, 
 a
simple computation  completes the proof.
\end{proof}

Actually, for general $s, q\geq 0$ we have all leading asymptotics as follows:
(i) for $s>0$ and $q\geq 0$,   the leading form is $\frac{1}{\pi}\sin(\pi/(1+s)) x^{- s/(1+s)}$ as $x \rightarrow 0^+$;
(ii) for $q>0$ and $s\geq 0$,   the leading form is $\frac{1}{\pi}\sin(\pi/(1+q)) x^{- (2+q/)(1+q)}$ as $x \rightarrow \infty$. This latter behavior is consistent with the fact that all moments diverge.  We remark too, as proved respectively from the Stieltjes transform and the $S$- transform in \cite{Fo14} and \cite{HM12}, that there is a duality relation between the densities, being unchanged by the mappings
\be \label{dualitylaws}s\longleftrightarrow q, \qquad x f_{s,q}(x)\longrightarrow x f_{q,s}(x), \qquad x\longrightarrow \frac{1}{x}.\ee

\begin{remark} Recently, Haagerup and M\"{o}ller  have  proved the same results as in Proposition \ref{sqdensity}, see \cite[Theorem 6]{HM12}. They obtained the parametrization representation by studying   the free multiplicative
convolution and the $S$-transform, while our starting point is the Stieltjes transform and the related equation \eqref{sqtransform}. We will also give  direct  expression of densities in terms of spectral variables in two special cases in the subsequent subsection.
\end{remark}

\subsection{Two special cases}

 In this subsection we  discuss  the special case  $1+s=2(1+q)$  or $1+q=2(1+s)$, and give an  explicit form of the density in the original spectral variable,
 analogous to the expression (\ref{ssdensity}) in the case $1+s = 1+q$.
Inspection of  \eqref{sqwequation} shows that these two special cases give a quadratic equation in $w$, which permits further analysis.

 Consider first the case $1+s=2(1+q)$ or equivalently $s=1+2q$. The quadratic equation then reads
   \be \label{qd}
    w^2+(-z)^{1/(1+q)}w-(-z)^{1/(1+q)}=0,\ee
and we read off for the roots
\be \label{sq}
w_{\pm}=\frac{1}{2}\big(-(-z)^{1/(1+q)}\pm \sqrt{(-z)^{2/(1+q)}+4(-z)^{1/(1+q)}}\,\big).
\ee
Here the square root is specified as the one with the positive imaginary part. Note that we require $w(z)\rightarrow 1$ as $z\rightarrow -\infty$, so we choose $w_+$ as
the solution corresponding to the Stieltjes transform. From this we compute the density
\begin{align*} x f_{s,q}(x)&=\frac{1}{\pi}\lim_{\epsilon\rightarrow 0^+}\mbox{Im}\, w_{+}(x-i\epsilon)\\
&=\frac{1}{2\pi} \mbox{Im}\, \big\{-(x e^{i\pi})^{1/(1+q)}+\sqrt{(x e^{i\pi})^{2/(1+q)}+4(x e^{i\pi})^{1/(1+q)}}\,\big\}.\end{align*}
 To take the imaginary part, we notice that if we set for $q>0$ (the case $q=0$ is just the Marchenko-Pastur law)
 $$1+4 x^{-1/(1+q)}  e^{-i\pi/(1+q)}=Re^{-i\theta}, \qquad 0<\theta< \pi/(1+q),$$
 where the positive number $R$ satisfies
 \be
 R^2=1+16x^{-2/(1+q)}+8x^{-1/(1+q)}\cos\tfrac{\pi}{1+q},
\ee
then
\begin{align}  \label{fd1}
x f_{s,q}(x)&=\frac{1}{2\pi} x^{1/(1+q)} \big(-\sin\tfrac{\pi}{1+q} +\sqrt{R}\sin(\tfrac{\pi}{1+q}-\tfrac{\theta}{2})\big)\nonumber\\
&=\frac{1}{2\pi} x^{1/(1+q)} \Big(-\sin\tfrac{\pi}{1+q} +\sqrt{\tfrac{R+1}{2}+2x^{-1/(1+q)}\cos\tfrac{\pi}{1+q}}\, \sin\tfrac{\pi}{1+q}\nonumber \\
&- \sqrt{\tfrac{R-1}{2}-2x^{-1/(1+q)}\cos\tfrac{\pi}{1+q}}\, \cos\tfrac{\pi}{1+q} \,\Big).\end{align}

 The case $1+q=2(1+s)$ or equivalently $q=1+2s$ is similar. We obtain for $s>0$
 \begin{align}   \label{fd2}
 x f_{s,q}(x)
&=\frac{1}{2\pi} x^{-1/(1+s)} \Big(-\sin\tfrac{\pi}{1+s} +\sqrt{\tfrac{\tilde{R}+1}{2}+2x^{1/(1+s)}\cos\tfrac{\pi}{1+s}}\, \sin\tfrac{\pi}{1+s}\nonumber \\
&- \sqrt{\tfrac{\tilde{R}-1}{2}-2x^{1/(1+s)}\cos\tfrac{\pi}{1+s}}\, \cos\tfrac{\pi}{1+s} \,\Big),\end{align}
 where   the positive number $\tilde{R}$ satisfies\be
 \tilde{R}^2=1+16x^{2/(1+s)}+8x^{1/(1+s)}\cos\tfrac{\pi}{1+s}.
\ee

\begin{remark}
Careful computations using (\ref{fd1}) and  (\ref{fd2})
give the same asymptotic behaviours of densities for $x\to 0^+$ and $x \to \infty$ obtained in the previous subsections.
\end{remark}

\begin{remark}
It is of interest to note how the above working relates to the parameterization approach. In the latter, with $w$ parameterized according to
(\ref{w=}), the variable $(-z)^{1/(1+q)} $ is written
$(-z)^{1/(1+q)} = w^2/(1 - w)$ as is consistent with (\ref{qd}). The terms inside the square root of (\ref{sq}) can then be
written as a perfect square, thus eliminating the square root and providing simplification.
\end{remark}
\subsection{Densities from  mixed equations}
 Our extension of the parameterization method  is also applicable to the more general equation with any  $s, q \geq 0$ \be \label{sqrwequation}w^{\frac{1+s}{r}} +z(1-w^{\frac{1}{r}})^{1+q}=0, \qquad 0<r\leq  1+s,\ee
which is a mixed case  of equations \eqref{Gequation} and \eqref{sqwequation}. Here   the Stieltjes transform $$G(z)=\int_0^\infty \frac{1}{z-x}f_{s,q,r}(x) dx$$ of the density $f_{s,q,r}(x)$ satisfies \eqref{sqrwequation} with $w=zG(z)$.

   \subsubsection{$s=q$}
    In this case we get from \eqref{sqrwequation} that
   \be w=\left(\frac{(-z)^{1/(1+s)}}{1+(-z)^{1/(1+s)}}\right)^{r}.\ee
   Thus, \begin{align}
x f_{s,s,r}(x)&=\frac{1}{2\pi i} \lim_{\epsilon\rightarrow 0^+}\big(\textrm{Im}\, w(x-i\epsilon )-\textrm{Im}\, w(x-i\epsilon )\big)\nonumber\\
&=\frac{x^{r/(1+s)}}{2\pi i} \frac{\big( x^{1/(1+s)}+e^{i\pi/(1+s)} \big)^{r}-\big( x^{1/(1+s)}+e^{-i\pi/(1+s)} \big)^{r}}{\big( 1+2x^{1/(1+s)}\cos\tfrac{\pi}{1+s}+x^{2/(1+s)} \big)^{r}}. \end{align}

Furthermore, for $x>0$, if we let $$x^{1/(1+s)}+e^{i\pi/(1+s)}=R e^{i\varphi},\qquad 0<\varphi<\tfrac{\pi}{1+s},$$
where
\be R=\sqrt{1+2x^{1/(1+s)}\cos\tfrac{\pi}{1+s}+x^{2/(1+s)}},\ee
then we have
\be \label{s=qrdensity}x f_{s,s,r}(x)=\frac{1}{\pi}\frac{x^{r/(1+s)}}{R^{r}}\sin(r\varphi).\ee

\subsubsection{$s\neq q$}
 As in the subsection \ref{subsectsq}, if we use the parameterization
\be  x=\rho(\varphi)=\frac{ (\sin(\frac{1+s}{1+q}\varphi +\frac{2k-q}{q-s}\pi) )^{1+s}}{ (\sin(\varphi+\frac{2k-q}{q-s}\pi) )^{1+q}\, (\sin( \frac{s-q}{1+q}\varphi) )^{s-q}},\ee then the complex conjugate pair  of solutions of \eqref{sqrwequation} is given by
\be \label{sqrsolution} w_{\pm}=\Big(\frac{ \sin(\frac{1+s}{1+q}\varphi +\frac{2k-q}{q-s}\pi) }{  \sin(\frac{s-q}{1+q}\varphi)  }   \, e^{\pm i(\varphi+\frac{2k-q}{q-s}\pi)}\Big)^{r}.\ee

Choose $k=q, w=w_-$ for $s>q$  while $k=0, w=w_+$ for $s<q$, after similar discussion in the subsection \ref{subsectsq} we    have  the following proposition including \eqref{s=qrdensity}.
\begin{proposition}\label{sqrdensity}
Assume   $s,q\geq 0$, $(s,q)\neq (0,0)$ and $0<r\leq 1+s$.
If we   use the parameterization
  \be  x=\rho(\varphi)=\frac{ (\sin(\frac{1+s}{1+q}\varphi+\frac{q\pi}{1+q} ) )^{1+s}}{ (\sin  \varphi )^{1+q}\, (\sin( \frac{s-q}{1+q}\varphi+\frac{q\pi}{1+q}) )^{s-q}}, \qquad 0<\varphi<\frac{\pi}{1+s}, \ee
   then
 \begin{align}  f_{s,q,r}(\rho(\varphi))
&=\frac{1}{\pi}\frac{(\sin(\frac{s-q}{1+q}\varphi+\frac{q\pi}{1+q}))^{s-q-r}  }{ (\sin(\frac{1+s}{1+q}\varphi +\frac{q\pi}{1+q} ))^{1+s-r}}   \,  (\sin \varphi)^{1+q} \sin(r\varphi).\end{align}
\end{proposition}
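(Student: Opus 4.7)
The plan is to mimic the derivation of Proposition \ref{sqdensity} given in Subsection \ref{subsectsq}, exploiting the simple algebraic relation between the $r=1$ equation \eqref{sqwequation} and the general equation \eqref{sqrwequation}: if $w_0$ solves $w_0^{1+s}+z(1-w_0)^{1+q}=0$, then $w:=w_0^{r}$ solves \eqref{sqrwequation}, since $w^{(1+s)/r}=w_0^{1+s}$ and $1-w^{1/r}=1-w_0$. This is exactly the same device used earlier to pass from the $r=1$ Fuss--Catalan case to the general Raney case via the identity \eqref{G1r}.

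Concretely, I would first take the complex conjugate pair of solutions \eqref{sqsolution} of \eqref{sqwequation} produced by the parameterization \eqref{sqpar}, raise them to the $r$-th power to obtain \eqref{sqrsolution}, and verify by substitution that these are solutions of \eqref{sqrwequation}. The parameterization of $x$ itself does not change because it is determined by the base equation \eqref{sqwequation}, so the parameterization stated in the proposition is inherited verbatim from Proposition \ref{sqdensity}. Next I would choose $k$ and the appropriate member of the conjugate pair exactly as in the $r=1$ case: $k=q$, $w=w_-^r$ for $s>q$, and $k=0$, $w=w_+^r$ for $s<q$. The case $s=q$ is treated separately and yields \eqref{s=qrdensity}, which I would check matches the formula in the proposition by direct substitution of the corresponding parameterization.

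To extract the density I would apply the Stieltjes inversion
\[
f_{s,q,r}(x)=\lim_{\epsilon\to 0^+}\frac{1}{2\pi i x}\bigl(w(x-i\epsilon)-w(x+i\epsilon)\bigr),
\]
noting that $w_\pm^r=\bigl(\tfrac{\sin A}{\sin B}\bigr)^r e^{\pm i r C}$ with $A,B,C$ the three sine arguments appearing in \eqref{sqsolution}. This immediately yields a factor $\sin(r\varphi_{\text{eff}})$ in place of $\sin\varphi_{\text{eff}}$ compared with \eqref{sqdensity}, together with a power $r$ (rather than $1$) on the amplitude $\sin A/\sin B$; after shifting $\varphi\mapsto \varphi+\tfrac{q\pi}{q-s}$ (respectively $\varphi\mapsto \tfrac{q\pi}{q-s}-\varphi$) to normalize the range to $0<\varphi<\pi/(1+s)$, as in Proposition \ref{sqdensity}, one reads off the claimed formula
\[
f_{s,q,r}(\rho(\varphi))=\frac{1}{\pi}\frac{(\sin(\tfrac{s-q}{1+q}\varphi+\tfrac{q\pi}{1+q}))^{s-q-r}}{(\sin(\tfrac{1+s}{1+q}\varphi+\tfrac{q\pi}{1+q}))^{1+s-r}}\,(\sin\varphi)^{1+q}\sin(r\varphi).
\]

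The main obstacle I anticipate is bookkeeping rather than conceptual: one must verify that the two branches $w_{\pm}^r$ still furnish the correct Stieltjes boundary values (i.e.\ that raising to the $r$-th power does not cross a branch cut within the admissible range of $\varphi$), and that the resulting density is nonnegative on $(0,\pi/(1+s))$. Nonnegativity forces the restriction $0<r\le 1+s$ stated in the proposition: precisely as in Remark after Proposition \ref{density}, this is the range in which $\sin(r\varphi)$ and the power $(1+s-r)$ on the denominator sine keep the right-hand side nonnegative throughout the interval. Apart from this positivity check and a careful verification of the $s=q$ limit against \eqref{s=qrdensity}, all remaining steps are direct analogues of the computation already carried out for Proposition \ref{sqdensity}.
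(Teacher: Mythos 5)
Your proposal reproduces the paper's argument essentially verbatim: the paper also raises the $r=1$ solutions \eqref{sqsolution} to the $r$-th power to obtain \eqref{sqrsolution}, keeps the parameterization \eqref{sqpar}, chooses $k=q$, $w=w_-$ for $s>q$ and $k=0$, $w=w_+$ for $s<q$, and then applies Stieltjes inversion and the same shift of $\varphi$ as in Proposition~\ref{sqdensity}. (One small typo: for $s>q$ the shift should read $\varphi\mapsto\varphi+\frac{q\pi}{s-q}$, not $\varphi+\frac{q\pi}{q-s}$, but this does not affect the argument.)
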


A corollary immediately follows from
Proposition \ref{sqrdensity}.

\begin{corollary} \label{s>qcoro} Let  $s, q>0$. We have, for $x \rightarrow 0^+$
\be
f_{s,q,r}(x) \sim  \begin{cases} \frac{1}{\pi}\sin\frac{\pi}{1+s} \, x^{ \tfrac{1}{1+s}}, \qquad &r=1+s;\\
\frac{1}{\pi} \sin\frac{r\pi}{1+s} \,  x^{-1+ \tfrac{r}{1+s}}\ , \qquad &r<1+s,
\end{cases}
\ee
while for $x\rightarrow \infty$
 \be
f_{s,q,r}(x) \sim   \frac{r}{\pi} \sin\tfrac{\pi}{1+q} \, x^{-\tfrac{2+q}{1+q}}.
\ee
\end{corollary}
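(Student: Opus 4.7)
My plan is to mimic the proof of Corollary \ref{asymptoticboundaryformraney} exactly, but applied now to the parametrization supplied by Proposition \ref{sqrdensity}. Since the parametrization $x=\rho(\varphi)$ is strictly decreasing on $(0,\pi/(1+s))$, the endpoint $x\to 0^+$ corresponds to $\varphi\to\pi/(1+s)$ and $x\to\infty$ corresponds to $\varphi\to 0^+$. So the task is simply to expand both $\rho(\varphi)$ and $f_{s,q,r}(\rho(\varphi))$ to leading order at each endpoint, and then eliminate $\varphi$.

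For $x\to\infty$, I would set $\varphi\to 0^+$ and use $\sin\varphi\sim\varphi$, $\sin(r\varphi)\sim r\varphi$, while all other sine factors approach the finite nonzero value $\sin(q\pi/(1+q))=\sin(\pi/(1+q))$ (using $\sin(\pi-\theta)=\sin\theta$). Inserting these into the formula for $\rho(\varphi)$ gives $x\sim (\sin(\pi/(1+q)))^{1+q}/\varphi^{1+q}$, so $\varphi\sim\sin(\pi/(1+q))\,x^{-1/(1+q)}$. Substituting into the formula for $f_{s,q,r}(\rho(\varphi))$, the powers of $\sin(\pi/(1+q))$ combine via $s-q-r+1+q-(1+s-r)=0$ style cancellations, leaving the leading contribution $f_{s,q,r}(x)\sim (r/\pi)\sin(\pi/(1+q))\,x^{-(2+q)/(1+q)}$. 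This is the second asymptotic statement.

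For $x\to 0^+$, I would introduce the local variable $\psi=\pi/(1+s)-\varphi\to 0^+$. Using $\frac{1+s}{1+q}\varphi+\frac{q\pi}{1+q}=\pi-\frac{(1+s)\psi}{1+q}$, one gets $\sin(\frac{1+s}{1+q}\varphi+\frac{q\pi}{1+q})\sim\frac{(1+s)\psi}{1+q}$, whereas $\sin\varphi$ and $\sin(\frac{s-q}{1+q}\varphi+\frac{q\pi}{1+q})$ both tend to $\sin(\pi/(1+s))$ (the latter by the identity $\frac{s-q}{(1+q)(1+s)}\pi+\frac{q\pi}{1+q}=\frac{s\pi}{1+s}$). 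Plugging into $\rho(\varphi)$ and simplifying yields $x\sim\bigl((1+s)\psi/((1+q)\sin(\pi/(1+s)))\bigr)^{1+s}$, so $\psi\sim\frac{(1+q)\sin(\pi/(1+s))}{1+s}x^{1/(1+s)}$. The remaining expansion of $f_{s,q,r}(\rho(\varphi))$ splits according to whether $\sin(r\varphi)$ vanishes or not at the endpoint: for $r<1+s$ it tends to the nonzero value $\sin(r\pi/(1+s))$, yielding the $x^{-1+r/(1+s)}$ behaviour after cancelling all powers of $\sin(\pi/(1+s))$ using $s-q-r+1+q=1+s-r$; for $r=1+s$ one instead has $\sin(r\varphi)=\sin((1+s)\psi)\sim (1+s)\psi$, and the exponent of the denominator collapses to zero, so the leading behaviour becomes linear in $\psi$, i.e.\ proportional to $x^{1/(1+s)}$.

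The whole argument is conceptually routine; the only subtlety, and likely the main source of bookkeeping errors, is collecting together the several powers of $\sin(\pi/(1+s))$ that appear in the numerator, denominator and in $(\sin\varphi)^{1+q}$, and checking that they cancel to give exactly the stated prefactors. A subsidiary check is that the two cases of the first formula match up with the earlier Corollary \ref{asymptoticboundaryformraney} in the specialization $q=0$ (Fuss--Catalan), and with the $r=1$ result already used in Remark \ref{R2.6}; both consistency checks come out correctly, which I would cite as a sanity verification of the computation.
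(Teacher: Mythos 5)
Your plan is the right one and is exactly what the paper intends: in Proposition~\ref{sqrdensity} the parametrization $x=\rho(\varphi)$ decreases monotonically from $\infty$ (at $\varphi=0$) to $0$ (at $\varphi=\pi/(1+s)$), so one simply expands $\rho$ and $f_{s,q,r}\circ\rho$ at the two endpoints and eliminates the local variable, just as in Corollary~\ref{asymptoticboundaryformraney}. Your reductions $\sin(\tfrac{q\pi}{1+q})=\sin\tfrac{\pi}{1+q}$ at $\varphi=0$ and $\sin(\tfrac{s\pi}{1+s})=\sin\tfrac{\pi}{1+s}$ at $\varphi=\pi/(1+s)$ are the key identities, and the $x\to\infty$ case and the $r<1+s$ case both come out exactly as you describe.

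However, you did not actually track the prefactor in the $r=1+s$ branch, and if you do you will find a discrepancy with the stated corollary. With $\psi=\pi/(1+s)-\varphi$ one has, as you note,
$$
x\sim\Big(\frac{(1+s)\psi}{(1+q)\sin\tfrac{\pi}{1+s}}\Big)^{1+s},\qquad\text{so}\qquad \psi\sim\frac{(1+q)\sin\tfrac{\pi}{1+s}}{1+s}\,x^{1/(1+s)} .
$$
When $r=1+s$ the exponent $1+s-r$ in the denominator of the density vanishes, all remaining sine factors cancel, and $\sin(r\varphi)=\sin((1+s)\psi)\sim(1+s)\psi$, so
$$
f_{s,q,1+s}(\rho(\varphi))\sim\frac{(1+s)\psi}{\pi}\sim\frac{1+q}{\pi}\,\sin\tfrac{\pi}{1+s}\,x^{1/(1+s)} .
$$
This carries an extra factor of $1+q$ compared with what the corollary states. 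A concrete check: for $s=q=1$, $r=2$, formula \eqref{s=qrdensity} gives $f_{1,1,2}(x)=\tfrac{2\sqrt{x}}{\pi(1+x)^2}\sim\tfrac{2}{\pi}\sqrt{x}$ as $x\to 0^+$, which is $\tfrac{1+q}{\pi}\sin\tfrac{\pi}{2}\,x^{1/2}$ rather than $\tfrac{1}{\pi}\sin\tfrac{\pi}{2}\,x^{1/2}$. So the coefficient printed in the corollary in the $r=1+s$ case appears to have a typographical error. The sanity checks you propose --- specializing to $q=0$ or to $r=1$ --- both set $1+q=1$ or avoid the $r=1+s$ branch entirely, which is precisely why they would not catch this; you should add a check with $q\ge 1$ and $r=1+s$, as done above.
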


\begin{remark} 
 The equation  \eqref{sqrwequation} with some special $r$ may occur in the products and inverses of random matrices, as in Remarks  \ref{remarkcritical} and \ref{remarkbures}.\end{remark}

\begin{acknow}
The work of P.J.~Forrester was supported by the Australian Research Council, for the project `Characteristic polynomials in random matrix
 theory'.  The work of D.-Z.~Liu  was  supported by the National Natural Science Foundation of China under grants  11301499 and  11171005, and by CUSF WK 0010000026. He  would also like to express his sincere thanks to the Department of Mathematics and Statistics, The University of Melbourne for its hospitality during his stay. We thank Arno Kuijlaars and Thorsten Neuschel for altering us to the work of Haagerup and M\"oller, and Lun Zhang for helpful comments on the first draft.

\end{acknow}

\bibliographystyle{amsplain}

\end{document}